\theoremstyle{plain}
\newtheorem{theorem}{Theorem}
\newtheorem{claim}{Claim}
\newtheorem{corollary}{Corollary}
\newtheorem{lemma}{Lemma}
\newtheorem{definition}{Definition}
\newtheorem{statement}{Statement}
\newcommand{\In}{{\rm In}}
\newcommand{\Out}{{\rm Out}}
\title{A Local Perspective on the Edge Removal Problem}
\author{\IEEEauthorblockN{Fei Wei}
\IEEEauthorblockA{University at Buffalo\\
feiwei@buffalo.edu}
\and
\IEEEauthorblockN{Michael Langberg}
\IEEEauthorblockA{University at Buffalo \\
mikel@buffalo.edu}
\and
\IEEEauthorblockN{Michelle Effros}
\IEEEauthorblockA{California Institute of Technology\\
effros@caltech.edu}
}
\begin{document}
\maketitle


\begin{abstract}
    The edge removal problem studies the loss in network coding rates that results when a network communication edge is removed from a given network. 
    It is known, for example, that in networks restricted to linear coding schemes and networks restricted to Abelian group codes, removing an edge $e^*$ with capacity $R_{e^*}$ reduces the achievable rate on each source by no more than $R_{e^*}$.
    In this work, we seek to uncover larger families of encoding functions for which the edge removal statement holds.
    We take a {\em local} perspective: instead of requiring that all network encoding functions satisfy certain restrictions (e.g., linearity), we limit only the function carried on the removed edge $e^*$. 
    Our central results give sufficient conditions on the function carried by edge $e^*$ in the code used to achieve a particular rate vector under which we can demonstrate the achievability of a related rate vector once $e^*$ is removed.
\end{abstract}


\section{Introduction}

The {\em edge removal problem} (see, for example,
\cite{ho2010equivalence, jalali2011impact, langberg2011network, wong2013capacity, lee2013outer, wong2015equivalence, wong2016tightnes, wei2017effect,gu2009achievable,kosut2016relationship}) studies the impact on communication rate of removing a network communication edge $e^*$ with capacity $R_{e^*}$ from a given network coding network.
(Detailed definitions of the statement above and those that follow appear in Section~\ref{sec:model}.)
In a number of special cases, including networks with co-located sources, networks with a super-source node, networks where the removed edge is connected to a terminal node, networks restricted to linear coding schemes, and networks restricted to Abelian group codes
\cite{ho2010equivalence, jalali2011impact, langberg2011network,wei2017effect}, it is known that removing an edge $e^*$ with capacity $R_{e^*}$ from a given network reduces the achievable rate on each source by no more than the edge capacity $R_{e^*}$.
When the loss in each source's rate for each source that results from an edge removal is bounded by the capacity of the edge removed, we say that the network satisfies an {\em edge removal statement}.

Whether the edge removal statement holds universally for any network coding instance and any family of encoding functions remains an intriguing open question with connections to a variety of different network information theoretic problems. For example, the edge removal problem lends insight on the relationship between index coding and network coding instances \cite{wong2013capacity}, the reducibility of multiple-multicast network coding to 2-unicast network coding \cite{kamath2014two,wong2015equivalence}, the entropic region characterization of network coding instances \cite{chan2014network,kim2016characterization,wong2016tightnes}, other upper bounds on network coding capacity \cite{lee2013outer}, and the notion of strong converses in network communication \cite{gu2009achievable,kosut2016relationship}. These rich connections shed light on the significance of the edge-removal problem and its solution.

In this work, we continue the line of study from \cite{lee2013outer, jalali2011impact,wei2017effect}, seeking to find larger families of encoding functions for which an edge removal statement holds. We here introduce a new {\em local} perspective, observing that when considering the impact of removing a given edge $e^*$ from a network, it sometimes suffices to consider only the function carried by edge $e^*$, rather than considering the encoding functions for all network edges. 
For example, while \cite{jalali2011impact} shows the edge removal statement is satisfied on networks restricted to linear coding schemes, we show, for a given coding scheme, that an edge removal statement is true on all edges $e^*$ that carry linear functions of the source inputs even if other edges in the network carry non-linear functions of those inputs.\footnote{We note that this extension is implicit in the proofs appearing in \cite{jalali2011impact}.}
More precisely, we study a {\em local edge removal statement}: Given any network coding instance that has a solution {achieving rate vector $\mathbf{R}$}, if the function carried on a given edge $e^*$ satisfies certain conditions (e.g., linearity), removing $e^*$ from the network reduces the achievable rate on each source by no more than the capacity of $e^*$.
Our work seeks to understand for which functions carried on edge $e^*$ the local edge removal statement holds.

The remainder of this paper is organized as follows.
In Section~\ref{sec:model}, we present our model and the definitions used throughout this work.
In Theorem~\ref{claim:erY} of Section~\ref{sec:local}, we present a sufficient condition for the function carried by edge $e^*$ that implies our local edge removal statement.
As the local edge removal statement can imply the (original) edge removal statement studied, e.g., in \cite{jalali2011impact, wei2017effect}, our sufficient condition unifies and generalizes the previous results of \cite{jalali2011impact, wei2017effect} on linear and Abelian group network codes.

In Section~\ref{sec:CWL}, we define a broad class of functions that generalize linear functions. We call these {\em component-wise-linear} (CWL) functions. We study CWL functions and show that such functions satisfy the sufficient condition of Theorem~\ref{claim:erY}.
Thus, for a given code, the local edge removal statement holds on $e^*$ if the function carried by $e^*$ is CWL. 
We then expand this result to certain functions that are {\em piece-wise} CWL. 

In Section~\ref{sec:nonlinear}, we study the possibility of proving the original edge removal statement (studied, e.g., in \cite{ho2010equivalence, jalali2011impact, langberg2011network, wong2013capacity, lee2013outer, wong2015equivalence, wong2016tightnes, wei2017effect,gu2009achievable,kosut2016relationship}) through the local lens of CWL functions. Namely, we ask the following question: {\em Given a network coding instance, a network edge $e^*$, and a network coding scheme, can the scheme be modified so that (a) the function carried on edge $e^*$ is CWL, and (b) there is no compromise on the communication rate?}
In other words, we ask if one can always modify a network coding scheme to preserve communication and guarantee that ``locally'' (on a given edge $e^*$) the function is CWL, whereas elsewhere, functions may be arbitrary.
An affirmative answer to this question would imply the original edge removal statement for any network coding instance, while a negative answer would not necessarily have any implications on edge removal. 
As any instance for which linear encoding functions are optimal satisfies the conditions of the question above, we study instances for which linear encoding functions are sub-optimal. 
While we do not resolve the question in this work, we prove that such code modifications are possible for the network coding instances and solutions presented in \cite{dougherty2005insufficiency, chan2008dualities, connelly2017class} for which linear coding is known to be sub-optimal, implying that edge removal holds for these solutions. {We note that our question is not resolved on the network instance given in \cite{blasiak2011lexicographic} for which, as above, linear network coding is suboptimal.}

Finally, we conclude in Section~\ref{sec:conc}. The proofs of several of our claims appear in the Appendix.


\section{Model and definitions} \label{sec:model}
Throughout the paper, we denote the size of a finite set $\mathcal{S}$ by $|\mathcal{S}|$. For any positive integer $k$, we denote the set $\{1,\dots,k\}$ by $[k]$. We use bold letters to denote vectors, for example $\mathbf{R}=(R_1,\dots,R_{|\mathcal{S}|})$ is a vector of dimension $|\mathcal{S}|$ and $R_i$ is the $i^{th}$ element of vector $\mathbf{R}$. 

A network instance $\mathcal{I} = (\mathcal{N},\mathcal{S},\mathcal{T},\mathcal{M})$ of the network coding problem includes a directed acyclic error-free network $\mathcal{N} = G(\mathcal{V,E})$ with nodes (also refered to as vertices) $\mathcal{V}$ and edges $\mathcal{E} \subset \mathcal{V}\times\mathcal{V}$. Each edge $e=(v_1,v_2)\in\mathcal{E}$ represents an error-free channel from node $v_1$ to node $v_2$. We use $R_e>0$ to denote the channel capacity of edge $e$. For each node $v\in\mathcal{V}$, $ \In(v) = \{(v_1,v):(v_1,v)\in\mathcal{E}\}$ and $\Out(v) = \{(v,v_1):(v,v_1)\in\mathcal{E}\}$ denote the set of incoming and outgoing edges of node $v$ respectively. The sets $\mathcal{S}\subseteq \mathcal{V}$ and $\mathcal{T}\subseteq \mathcal{V}$ denote the set of source nodes and the set of terminal nodes respectively. Without loss of generality, each source $s\in\mathcal{S}$ has no incoming edges and each terminal $t\in\mathcal{T}$ has no outgoing edges, which implies $\mathcal{S}\cap\mathcal{T}=\phi$. For convenience of notation, at times, we denote nodes in the source set $\mathcal{S}$ by integers $i \in \{1,2,\dots,|\mathcal{S}|\}$. Finally, $|\mathcal{S}|\times|\mathcal{T}|$ binary matrix $\mathcal{M} = [m_{st}]$ describes the network demands, with $m_{st} = 1$ if and only if source $s \in \mathcal{S}$ is requested by terminal $t \in \mathcal{T}$. 

Consider an instance $\mathcal{I} = (\mathcal{N},\mathcal{S},\mathcal{T},\mathcal{M})$ of the network coding problem. A network code of blocklength $n$ and rate vector $\mathbf{R} = (R_i)_{i\in \mathcal{S}}$ on $\mathcal{I}$ is defined \emph{by a set of random variables $\{X_f : f\in \mathcal{S}\cup\mathcal{E}\}$} as follows. 

Each source $i\in\mathcal{S}$ independently generates source message $X_i$ uniformly distributed on the support set $\mathcal{X}_i = [2^{nR_i}]$. Each edge $e\in\mathcal{E}$ carries edge message $X_e$ with support set $\mathcal{X}_e = [2^{nR_e}]$. For any set $A\subseteq \mathcal{E}$, we use $X_A$ to represent the vector of messages on the edges in $A$, giving $X_A=(X_e)_{e\in A}$. The \textit{local encoding function} $\phi_{le}:\prod_{f\in \In(u)}\mathcal{X}_f \mapsto \mathcal{X}_e$ for edge $e=(u,v)\in \mathcal{E}$ takes as its input the vector $X_{\In(u)}$ of messages (i.e., random variables) associated with incoming edges $\In(u)$. The edge message $X_e$ equals the evaluation of $\phi_{le}$ on its input, giving $X_e=\phi_{le}(X_{\In(u)})$. Notice that this relationship implies $H(X_e|X_{\In(u)})=0$, for all $e=(u,v)\in \mathcal{E}$. 
Edges that leave source node $s\in\mathcal{S}$ have corresponding local encoding functions that take the source information $X_s$ as input.

The \textit{global encoding function} may be defined inductively from the local encoding functions, see e.g.\cite{yeung2008information}. For any edge $e=(u,v)\in\mathcal{E}$, edge message $X_e$ is uniquely determined by source message vector $X_\mathcal{S}=(X_i)_{i\in\mathcal{S}}$ via the global encoding functions $\phi_{ge}: \prod_{i\in\mathcal{S}} \mathcal{X}_i \mapsto \mathcal{X}_e,$ giving $X_e=\phi_{le}(X_{\In(u)})=\phi_{ge}(X_\mathcal{S})$.

For any terminal node $t\in\mathcal{T}$, the received message $X_{\In(t)}$ equals the evaluation of the global encoding functions $\phi_{ge}$ on $e\in \In(t)$; thus $H(X_{\In(t)}|X_{\mathcal{S}})=0$. The decoding function 
$\phi_{t}:\prod_{e\in\In(t)}\mathcal{X}_e \mapsto \hat{\mathcal{X}}_t$ takes as input received message $X_{\In(t)}$ and emits the reconstruction $\hat{X}_t$ of the demanded source messages, giving $\hat{X}_t=\phi_t(X_{\In(t)})$.

An instance $\mathcal{I}$ is said to be \emph{$(\epsilon, \mathbf{R}, n)$-feasible} if and only if there exists a network code $\{X_f:f\in\mathcal{S}\cup\mathcal{E}\}$ (with corresponding encoding/decoding functions) of blocklength $n$, that satisfies the following properties:
\begin{enumerate}
    \item Uniform and independent sources: $H(X_\mathcal{S}) = \sum_{i\in\mathcal{S}}H(X_i)=\sum_{i\in\mathcal{S}}\log |\mathcal{X}_i|$.
    \item Source rate: $H(X_i)\geq n R_i$, $\forall i\in\mathcal{S}$.
    \item Edge capacity: $\mathcal{X}_e = [2^{n R_e}]$, $\forall e\in \mathcal{E}$.    
    \item Encoding: $H(X_{e}|X_{\In(u)}) = 0$, $\forall e=(u,v)\in \mathcal{E}$.
    \item Decoding: for each terminal $t\in\mathcal{T}$, $\Pr(\hat{X}_t = X_t) > 1- \epsilon$, here, the probability is taken with respect to the distribution on the source random variables $X_\mathcal{S}$.
\end{enumerate}


\begin{statement}[Local edge removal statement] \label{statement:edgeremoval}
    Let $\mathcal{I}=(\mathcal{N},\mathcal{S},\mathcal{T},\mathcal{M})$ be a network coding instance. Suppose that $\mathcal{I}$ is $(\epsilon,\mathbf{R},n)$-feasible. Consider a modified instance $\mathcal{I'} = (\mathcal{N'},\mathcal{S},\mathcal{T},\mathcal{M})$ obtained by removing an edge $e^*$ with capacity $R_{e^*}$ from $\mathcal{N}$. If $\mathcal{I}'$ is $(\epsilon,\mathbf{R}-R_{e^*}\cdot\mathbf{1},n)$-feasible, we say that the local edge removal statement holds on $\mathcal{I}$ for edge $e^*$ and parameters $(\epsilon,\mathbf{R},n)$. If the local edge removal statement holds in this setting for all edges in $\mathcal{N}$ and all settings of $(\epsilon,\mathbf{R},n)$ for which $\mathcal{I}$ is $(\epsilon,\mathbf{R},n)$-feasible, we say that the edge removal statement holds on $\mathcal{I}$. Here $\mathbf{R}-R_{e^*}\cdot\mathbf{1}=((R_1-R_{e^*})^+,\dots,(R_{|\mathcal{S}|}-R_{e^*})^+)$ where $(R-R_{e^*})^+=\max\{0,R-R_{e^*}\}$.
\end{statement}


\section{A sufficient condition for local edge removal} \label{sec:local}

In what follows, we present our sufficient condition for local edge removal. Roughly speaking, we show that the local edge removal statement holds on $\mathcal{I}$ for a given edge $e^*$ in $\mathcal{I}$ and parameters $(\epsilon,\mathbf{R},n)$ if there exists an $(\epsilon,\mathbf{R},n)$-feasible code $\{X_f: f\in\mathcal{S}\cup \mathcal{E}\}$ for $\mathcal{I}$ and an auxiliary random variable $Y$ that is a deterministic function $f_Y$ of $X_{\mathcal{S}}$ such that:
\begin{enumerate}[label=(\Alph*)]
    \item $X_{e^*}$ is a deterministic function of $Y$.
    \item Conditioned on $Y$ the sources are independent.
    \item There exists a value $y'$ of $Y$ such that (a) the (source) pre-image of $y'$ under function $f_Y$ is ``relatively large'', and (b) the pre-image captures the error statistics of the network code.
\end{enumerate}
Note that as $Y$ is a deterministic function of $X_\mathcal{S}$, the random variable $Y$ induces a partition of the set $\mathcal{X}_\mathcal{S}$. Requirements $(A)-(C)$ are formalized in the theorem below using the following notation. Given an $(\epsilon,\mathbf{R},n)$ network code, we say $x_\mathcal{S}$ is ``good'' if $x_\mathcal{S}$ is decoded correctly by all terminals; otherwise $x_\mathcal{S}$ is ``bad.'' We define the set of all good elements in $\mathcal{X}_\mathcal{S}$ as $\mathcal{X}^G_\mathcal{S}=\{x_\mathcal{S}: x_\mathcal{S}\ \mbox{is good}\}$ and the set of bad elements as $\mathcal{X}^B_\mathcal{S}=\mathcal{X}_\mathcal{S}\setminus \mathcal{X}^G_\mathcal{S}$. With these definitions, $\epsilon = \frac{|\mathcal{X}^B_\mathcal{S}|}{|\mathcal{X}_\mathcal{S}|}=\frac{|\mathcal{X}^B_\mathcal{S}|}{|\mathcal{X}^G_\mathcal{S}|+|\mathcal{X}^B_\mathcal{S}|}$ by the assumed uniform distribution on $\mathcal{X}_\mathcal{S}$.
We also express the partition induced by $y$ as $A(y)=\{x_\mathcal{S}\in \mathcal{X}_\mathcal{S}:f_Y(x_\mathcal{S})=y\}$, $A^G(y)=\{x_\mathcal{S}\in\mathcal{X}^G_\mathcal{S}:f_Y(x_\mathcal{S})=y \}$, $A^B(y)=A(y)\setminus A^G(y)$ and $A_i(y)=\{x_i:\exists x_\mathcal{S}\in A(y), x_i=x_\mathcal{S}(i) \}$ for any $i\in\mathcal{S}$.


\begin{theorem} \label{claim:erY} 
    Let $\mathcal{I}=(\mathcal{N},\mathcal{S},\mathcal{T},\mathcal{M})$ be a network coding instance that is $(\epsilon,\mathbf{R},n)$-feasible with a corresponding network code $\{X_f: f\in \mathcal{S}\cup \mathcal{E}\}$. The local edge removal statement holds on $\mathcal{I}$ for edge $e^*$ in $\mathcal{I}$ and parameters $(\epsilon,\mathbf{R},n)$ if there exists a random variable $Y$ (with support $\mathcal{Y}$) which is a deterministic function $f_Y$ of $X_{\mathcal{S}}$ such that
    \begin{enumerate}[label=(\Alph*)]
        \item $H(X_{e^*}| Y) = 0$. (Therefore, there exists a function $g_Y$ s.t. $X_{e^*}=g_Y(Y)$.)
        \item $H(X_\mathcal{S}|Y)=\sum_{i\in\mathcal{S}}H(X_i|Y)$.
        \item There exists $y'\in\mathcal{Y}$ s.t. $|A_i(y')|\geq |\mathcal{X}_i|/|\mathcal{X}_{e^*}|$ for every $i\in\mathcal{S}$ and $|A^B(y')|\leq \epsilon|A(y')|$.
    \end{enumerate}
\end{theorem}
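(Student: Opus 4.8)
The plan is to fix a code $\{X_f: f\in\mathcal{S}\cup\mathcal{E}\}$ achieving $(\epsilon,\mathbf{R},n)$ on $\mathcal{I}$ together with the random variable $Y=f_Y(X_\mathcal{S})$ satisfying $(A)$--$(C)$, and then construct an explicit code for $\mathcal{I}'$ (the network with $e^*$ deleted) that achieves rate vector $\mathbf{R}-R_{e^*}\cdot\mathbf{1}$ with the same error parameter $\epsilon$. The guiding idea: since $X_{e^*}=g_Y(Y)$ by $(A)$, the only thing the rest of the network ever ``sees'' coming out of $e^*$ is a function of $Y$; if we restrict the source alphabets to the preimage $A(y')$ of the single value $y'$ guaranteed by $(C)$, then $X_{e^*}$ is frozen to the constant $g_Y(y')$, so every downstream node can simply simulate the original local encoding functions while hard-wiring that constant in place of the missing symbol on $e^*$. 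The conditional-independence hypothesis $(B)$ is what lets us realize ``restrict to $A(y')$'' as a genuine product-form source distribution, i.e. as a legitimate network code with independent uniform sources.

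Concretely, I would proceed in the following steps. First, define the new source alphabets $\mathcal{X}_i' = A_i(y')$ for each $i\in\mathcal{S}$; by $(C)$, $|A_i(y')|\ge |\mathcal{X}_i|/|\mathcal{X}_{e^*}| = 2^{n(R_i-R_{e^*})}$, so (after trimming each to a set of size exactly $\lceil 2^{n(R_i-R_{e^*})}\rceil$, or arguing at the level of rates) these alphabets support rate $(R_i-R_{e^*})^+$; the $(\cdot)^+$ is handled by noting a source can always ignore extra capacity. Second, observe that $(B)$ gives $H(X_\mathcal{S}\mid Y=y')=\sum_i H(X_i\mid Y=y')$, which forces $A(y')=\prod_{i\in\mathcal{S}} A_i(y')$ as a set (the conditional distribution of $X_\mathcal{S}$ given $Y=y'$ is uniform on $A(y')$, and a uniform distribution on a subset of a product set has independent marginals iff the subset is itself a product); hence making each new source $X_i'$ uniform and independent on $A_i(y')$ exactly reproduces the conditional law $X_\mathcal{S}\mid Y=y'$. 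Third, build the code on $\mathcal{I}'$: the edge leaving the tail of $e^*$... wait, rather, every edge $e\ne e^*$ keeps its original local encoding function $\phi_{le}$, except that wherever $\phi_{le}$ would have read the symbol on $e^*$ we substitute the fixed value $g_Y(y')$; inductively this makes $X_e = \phi_{ge}(X_\mathcal{S}')$ agree with the original global function evaluated on inputs drawn from $A(y')$, because on that preimage $X_{e^*}$ really does equal $g_Y(y')$. Fourth, decoders are unchanged, and the error probability of the new code is $|A^B(y')|/|A(y')|\le \epsilon$ by the second part of $(C)$, so $\mathcal{I}'$ is $(\epsilon,\mathbf{R}-R_{e^*}\cdot\mathbf{1},n)$-feasible, which is exactly the local edge removal statement.

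The main obstacle is the bookkeeping around hypothesis $(B)$ and the product-set claim: one must be careful that $(B)$ is stated as an equality of entropies over the full distribution of $Y$, and what is actually needed is that for the specific value $y'$ the set $A(y')$ factors as $\prod_i A_i(y')$. I expect this follows because $H(X_\mathcal{S}\mid Y)=\sum_i H(X_i\mid Y)$ together with $H(X_i\mid Y)\le \sum_{y}\Pr(Y=y)\log|A_i(y)|$ and $H(X_\mathcal{S}\mid Y=y)=\log|A(y)|\le \sum_i\log|A_i(y)|$ pins down $|A(y)|=\prod_i|A_i(y)|$ for every $y$ in the support, which for a uniform conditional law means $A(y)$ is a product set; but spelling this out cleanly — and making sure the rate accounting with ceilings/$(\cdot)^+$ and the ``ignore extra capacity'' reductions are airtight — is the part that needs genuine care rather than a one-line appeal. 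A secondary subtlety is confirming that substituting the constant $g_Y(y')$ into the downstream local functions yields a valid acyclic network code (it does, since we only ever replace an input, never change the DAG structure), and that capacities on the surviving edges are unchanged so condition $3$ of feasibility still holds.
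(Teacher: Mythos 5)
Your proposal is correct and follows essentially the same route as the paper's proof: restrict the sources to the preimage $A(y')$, use condition (B) (averaged equality plus per-value subadditivity of conditional entropy) to force $A(y')=\prod_{i\in\mathcal{S}}A_i(y')$ with independent uniform restricted sources, hard-wire the constant $g_Y(y')$ in place of $e^*$ while keeping all other encoders and decoders, and invoke condition (C) for both the rate bound $R_i'\geq R_i-R_{e^*}$ and the error bound $|A^B(y')|/|A(y')|\leq\epsilon$. The only cosmetic difference is your optional ``trimming'' of $A_i(y')$, which is unnecessary (and best avoided, since it could skew the error statistics): feasibility only requires $H(X_i')\geq n(R_i-R_{e^*})$, which holds directly from $|A_i(y')|\geq|\mathcal{X}_i|/|\mathcal{X}_{e^*}|$, exactly as in the paper.
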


\begin{proof}
By assumption (A), $X_{e^*}=g_Y(Y)$ for some deterministic function $g_Y$. Fix any $y'\in\mathcal{Y}$ that satisfies condition (C) of the theorem, let $x_{e^*}=g_Y(y')$. Fixing $Y=y'$, we design a new network code for the network instance $\mathcal{I}'$ obtained by removing edge $e^*$ from instance $\mathcal{I}$. The new network code is the restriction of the original code to $A(y')$.

We first define the source random variable $X'_\mathcal{S}$ with alphabet $\mathcal{X}_\mathcal{S}'=A(y')$ and probability mass function (pmf) $\Pr(X_\mathcal{S}'=x_\mathcal{S}') =\Pr(X_\mathcal{S}=x_\mathcal{S}'|Y=y')=\frac{1}{|A(y')|}$.
Under this pmf, $H(X_\mathcal{S}')=H(X_\mathcal{S}|Y=y')=\log |A(y')|$. 
We define the random variable $X_i'$ with alphabet $\mathcal{X}_i'=A_i(y')$, and pmf $\Pr(X_i'=x_i') =\sum_{x_{\mathcal{S}}' \in \mathcal{X}_{\mathcal{S}}': x_{\mathcal{S}}'(i)=x_i'} \Pr(X_{\mathcal{S}}'=x_{\mathcal{S}}')=\Pr(X_i=x_i'| Y=y')$, similarly, $H(X_i')=H(X_i|Y=y')$.

For any $e\in\mathcal{E}$, let $X'_e$ be a random variable with alphabet $\mathcal{X}'_e=\{x_e\in\mathcal{X}_e: \exists x_\mathcal{S} \in \mathcal{X}'_\mathcal{S}, x_{e}=\phi_{g e}(x_\mathcal{S})\}$ and pmf $\Pr(X_e'=x_e')=\Pr(X_e=x_e'|Y=y')$. Note that 
\begin{equation} \label{equa:YsizeXe'}
    |\mathcal{X}_e'|\leq|\mathcal{X}_e|.
\end{equation}
We now show that $\{X'_f: f \in \mathcal{S}\cup\mathcal{E}\}$ is a network code over instance $\mathcal{I}'$.

\emph{1) Uniform and independent sources:} 
We show independence first. By definition, we have
\begin{equation}\label{equa:YindS1}
     H(X_\mathcal{S}|Y)=\sum_{y\in\mathcal{Y}} \Pr(Y=y) H(X_\mathcal{S}|Y=y)
\end{equation}
and
\begin{equation} \label{equa:YindS2}
\begin{aligned}
    \sum_{i\in\mathcal{S}} H(X_i|Y) =& \sum_{i\in\mathcal{S}}\sum_{y\in\mathcal{Y}} \Pr(Y=y) H(X_i|Y=y)\\=& \sum_{y\in\mathcal{Y}} \Pr(Y=y) \sum_{i\in\mathcal{S}} H(X_i|Y=y)
\end{aligned}  
\end{equation}

Combining (\ref{equa:YindS1}), (\ref{equa:YindS2}) and condition (B) of the theorem gives 
\begin{equation} \label{equa:YindS3}
{\begin{aligned}
    \sum_{y\in\mathcal{Y}} \Pr(Y=y) H(X_\mathcal{S}|Y=y)= \sum_{y\in\mathcal{Y}} \Pr(Y=y) \sum_{i\in\mathcal{S}} H(X_i|Y=y).
\end{aligned}}
\end{equation}
For any $y \in \mathcal{Y}$, $H(X_\mathcal{S}|Y=y) \leq \sum_{i\in\mathcal{S}} H(X_i|Y=y)$; therefore, (\ref{equa:YindS3}) implies 
\begin{equation}\label{equa:condition}
    H(X_\mathcal{S}|Y=y) = \sum_{i\in\mathcal{S}} H(X_i|Y=y)
\end{equation}
for each $y\in\mathcal{Y}$. Setting $Y=y'$, we conclude that $\{X_i':i\in\mathcal{S}\}$ are independent.

Now we show that $\{X_i':i\in\mathcal{S}\}$ are uniform. By our definitions, for any $x_{\mathcal{S}}=(x_i)_{i\in\mathcal{S}}\in A(y')$, we have $x_i\in A_i(y')$ for all $i\in\mathcal{S}$, but not necessarily vice versa. Thus, $|A(y')|\leq \prod_{i\in\mathcal{S}}|A_i(y')|$.
Suppose that there exists a tuple $x_{\mathcal{S}}=(x_i)_{i\in\mathcal{S}}$ such that $x_i\in A_i(y')$ for all $i\in\mathcal{S}$ but $x_{\mathcal{S}}\not\in A(y')$, namely, $|A(y')|< \prod_{i\in\mathcal{S}}|A_i(y')|$. That is, $\Pr(X_\mathcal{S}=x_\mathcal{S}|Y=y')=0$ and $\Pr(X_i=x_i|Y=y')\neq 0$ for all $i\in\mathcal{S}$. Accordingly $\Pr(X_\mathcal{S}=x_\mathcal{S}|Y=y')\neq \prod_{i\in\mathcal{S}}\Pr(X_i=x_i|Y=y')$ which contradicts the independence of $\{X_i':i\in\mathcal{S}\}$. Thus $|A(y')|= \prod_{i\in\mathcal{S}}|A_i(y')|$.
If there exists $i\in\mathcal{S}$ such that $X_i'$ is not uniform, then $H(X_{\mathcal{S}}')=\log |A(y')|=\sum_{i\in\mathcal{S}}\log |A_i(y')|>\sum_{i\in\mathcal{S}} H(X_i')$ which is a contradiction to the independence of $\{X_i':i\in\mathcal{S}\}$. 

\emph{2) No information on edge $e^*$:}
By condition (A), $X_{e^*}=g_Y(Y)$. Thus, the value on $e^*$ is fixed over $X'_{\mathcal{S}}$, which means we can use this code when the edge $e^*$ is removed from the network.

\emph{3) Encoding:}
By the definition of a network code, $H(X_e|X_{\In(u)})=0$ for every $e=(u,v)\in\mathcal{E}$. For the same edge $e=(u,v)$ in instance $\mathcal{I}'$, $H(X_e',X_{\In(u)}') = H(X_e,X_{\In(u)}|Y=y') =H(X_{\In(u)}|Y=y') = H(X_{\In(u)}')$. Thus $H(X_e'|X_{\In(u)}')=0$ and $\{X_f':f\in\mathcal{S}\cup\mathcal{E}\}$ is a network code. 

\emph{4) Decoding:} 
By our definitions, the decoding error is $|A^B(y')|/|A(y')|$ which by condition (C) of the theorem is at most $\epsilon$. 

\emph{5) Impact on source rate:}
By our definitions and by source uniformity, we have $n R_i'=\log |\mathcal{X}_i'|=\log |A_i(y')|$, $n R_i=\log |\mathcal{X}_i|$, and $n R_{e^*}= \log |\mathcal{X}_{e^*}|$. 
By condition (C) of the theorem, for each $i\in\mathcal{S}$ we have $R_i'\geq R_i-R_{e^*}$.

\emph{6) Edge capacity limit:}
By the definition of the new edge messages and (\ref{equa:YsizeXe'}), we have $2^{n R_e'}=|\mathcal{X}'_e| \leq |\mathcal{X}_e|= 2^{n R_e}$, accordingly $ R_e'\leq R_e$ for any $e\in\mathcal{E}$ and the edge capacity limit is satisfied.
\end{proof}


Corollary~\ref{corollary:1} follows directly from the proof of Theorem~\ref{claim:erY}.

\begin{corollary}\label{corollary:1}
    Let $\mathcal{I}$ be a network coding instance that is $(\epsilon,\mathbf{R},n)$-feasible with a corresponding network code. The local edge removal statement holds on $\mathcal{I}$ for edge $e^*$ in $\mathcal{I}$ and parameters $(\epsilon,\mathbf{R},n)$ if there exists a subset $\mathcal{X}_{\mathcal{S}}'\subseteq \mathcal{X}_{\mathcal{S}}$ such that $\mathcal{X}_{\mathcal{S}}'=\prod_{i\in\mathcal{S}} \mathcal{X}_i'$ for $\mathcal{X}_i'\subseteq \mathcal{X}_i$, with 1) $X_{e^*}$ is constant conditioned on $X_\mathcal{S}'$, 2) $|\mathcal{X}_i'|\geq \frac{|\mathcal{X}_i|}{|\mathcal{X}_{e^*}|}$ and 3) $|\mathcal{X}_\mathcal{S}'\cap \mathcal{X}_\mathcal{S}^{G}|\geq (1-\epsilon)|\mathcal{X}_\mathcal{S}'|$.
\end{corollary}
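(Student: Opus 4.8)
The plan is to derive Corollary~\ref{corollary:1} as a specialization of Theorem~\ref{claim:erY} by constructing the auxiliary random variable $Y$ explicitly from the given product subset $\mathcal{X}_{\mathcal{S}}' = \prod_{i\in\mathcal{S}}\mathcal{X}_i'$. The natural choice is to let $Y$ be the indicator of membership in $\mathcal{X}_{\mathcal{S}}'$: define $f_Y(x_\mathcal{S}) = y'$ if $x_\mathcal{S}\in\mathcal{X}_{\mathcal{S}}'$ and $f_Y(x_\mathcal{S}) = y''$ (some distinct symbol) otherwise, so that $\mathcal{Y}=\{y',y''\}$ and $A(y')=\mathcal{X}_{\mathcal{S}}'$. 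With this choice, I would verify each of the three hypotheses of Theorem~\ref{claim:erY} in turn.

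First, for condition (A), $H(X_{e^*}\mid Y)=0$: on the event $Y=y'$ we have $X_{e^*}$ constant by assumption~1) of the corollary, and on the event $Y=y''$ the conditional entropy $H(X_{e^*}\mid Y=y'')$ is at most $\log|\mathcal{X}_{e^*}|$, which is not zero in general. This is the one place the naive indicator construction does not immediately work, and it is the main obstacle: Theorem~\ref{claim:erY}(A) demands $X_{e^*}$ be a function of $Y$ globally, not just on the good block. The fix is to refine $Y$ off the block $\mathcal{X}_{\mathcal{S}}'$ by also recording the value of $X_{e^*}$; that is, set $f_Y(x_\mathcal{S}) = y'$ if $x_\mathcal{S}\in\mathcal{X}_{\mathcal{S}}'$ and $f_Y(x_\mathcal{S}) = (\text{outside}, \phi_{g e^*}(x_\mathcal{S}))$ otherwise. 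Then $X_{e^*}$ is determined by $Y$ everywhere (constant on the $y'$ cell, and read off directly on the other cells), so (A) holds. One should double check that the distinguished value is still $y'$ with $A(y')=\mathcal{X}_{\mathcal{S}}'$; this is automatic since the refinement only splits the complement.

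Second, condition (B), $H(X_\mathcal{S}\mid Y)=\sum_{i\in\mathcal{S}}H(X_i\mid Y)$: it suffices to check conditional independence of the sources given each value of $Y$, since equality of the sum then follows by averaging as in~\eqref{equa:YindS1}--\eqref{equa:YindS3} of the theorem's proof (read in reverse). Conditioned on $Y=y'$, the sources are uniform and independent on the product set $\prod_i\mathcal{X}_i'$ by construction. Conditioned on any cell of the form $(\text{outside}, x_{e^*})$, the conditional law of $X_\mathcal{S}$ is uniform on $\{x_\mathcal{S}\notin\mathcal{X}_{\mathcal{S}}': \phi_{ge^*}(x_\mathcal{S})=x_{e^*}\}$, which need not be a product set, so independence can fail there. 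To repair this I would instead refine $Y$ all the way to full resolution off $\mathcal{X}_{\mathcal{S}}'$, i.e. let $f_Y$ be injective on $\mathcal{X}_\mathcal{S}\setminus\mathcal{X}_{\mathcal{S}}'$ (for instance $f_Y(x_\mathcal{S})=(\text{outside},x_\mathcal{S})$ there); then each such cell is a singleton and conditional independence holds trivially, while $X_{e^*}$ is of course still a function of $Y$. With $Y=y'$ the sources are independent because $A(y')$ is the product $\prod_i\mathcal{X}_i'$, so (B) holds on every cell and hence in aggregate.

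Finally, condition (C): we need a $y'\in\mathcal{Y}$ with $|A_i(y')|\ge|\mathcal{X}_i|/|\mathcal{X}_{e^*}|$ for all $i$ and $|A^B(y')|\le\epsilon|A(y')|$. Taking the distinguished symbol $y'$, we have $A(y')=\mathcal{X}_{\mathcal{S}}'$ and $A_i(y')=\mathcal{X}_i'$, so the size bound is exactly assumption~2) of the corollary, $|\mathcal{X}_i'|\ge|\mathcal{X}_i|/|\mathcal{X}_{e^*}|$. For the error bound, $A^B(y')=\mathcal{X}_{\mathcal{S}}'\setminus\mathcal{X}_\mathcal{S}^G$ has size $|\mathcal{X}_{\mathcal{S}}'|-|\mathcal{X}_{\mathcal{S}}'\cap\mathcal{X}_\mathcal{S}^G|\le\epsilon|\mathcal{X}_{\mathcal{S}}'|=\epsilon|A(y')|$ by assumption~3). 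Thus all three hypotheses of Theorem~\ref{claim:erY} are met, and the conclusion — that the local edge removal statement holds on $\mathcal{I}$ for $e^*$ and $(\epsilon,\mathbf{R},n)$ — follows immediately. I expect the only subtlety worth spelling out in the write-up is the construction of $Y$ that simultaneously makes $X_{e^*}$ a global function of $Y$ and keeps the sources conditionally independent on every cell; once the "singleton off the block" refinement is in place, the rest is bookkeeping against the three conditions.
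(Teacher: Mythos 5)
Your proposal is correct, but it reaches the corollary by a different route than the paper does. The paper does not construct any auxiliary random variable here: it observes that its proof of Theorem~\ref{claim:erY} only ever uses the restriction of the code to the single pre-image $A(y')$, together with exactly the three facts that $A(y')$ is a product set (giving uniform, independent restricted sources), that $X_{e^*}$ is constant on it (so $e^*$ can be removed), and that the size and error bounds hold on it; hence the same restriction argument can be re-run verbatim with $A(y')$ replaced by $\mathcal{X}_{\mathcal{S}}'$, which is why the paper states that the corollary ``follows directly from the proof'' rather than from the statement of the theorem. You instead use Theorem~\ref{claim:erY} as a black box, which forces you to build a global $Y$, and you correctly identify the only real obstacle: conditions (A) and (B) of the theorem are global requirements over all of $\mathcal{Y}$, whereas the corollary's hypotheses are local to $\mathcal{X}_{\mathcal{S}}'$. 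Your fix --- taking $f_Y$ to be the distinguished symbol $y'$ on $\mathcal{X}_{\mathcal{S}}'$ and injective on the complement --- is sound: on singleton cells (A) and the per-cell version of (B) hold trivially, on the cell $y'$ they follow from hypothesis~1) and the product structure (uniformity on a product set factorizes), averaging over cells yields (B) in aggregate, and $A(y')=\mathcal{X}_{\mathcal{S}}'$, $A_i(y')=\mathcal{X}_i'$ make hypotheses~2) and~3) exactly condition (C). The trade-off is that your argument reuses the theorem statement cleanly at the cost of the refinement construction, while the paper's reading avoids constructing $Y$ altogether but requires inspecting the proof rather than citing the result.
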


Corollary~\ref{claim:erGroup} addresses Abelian group network codes which reproves Theorem 3 of \cite{wei2017effect} using the methodology of Theorem~\ref{claim:erY}. The proof is given in Appendix. 
\begin{corollary} \label{claim:erGroup}
    Let $\mathcal{I}$ be a network coding instance which is $(\epsilon,\mathbf{R},n)$-feasible for $\epsilon<\frac{1}{2}$ with a corresponding Abelian group network code $\{X_f:f\in\mathcal{S}\cup\mathcal{E}\}$. The edge removal statement holds on instance $\mathcal{I}$.
\end{corollary}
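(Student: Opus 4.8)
The plan is to apply Theorem~\ref{claim:erY} by exhibiting, for any Abelian group network code on $\mathcal{I}$, a suitable auxiliary random variable $Y$ together with a value $y'$ satisfying conditions (A)--(C). Recall that in an Abelian group network code each source alphabet $\mathcal{X}_i$ is an Abelian group, each edge alphabet $\mathcal{X}_e$ is an Abelian group, and every global encoding function $\phi_{ge}:\prod_{i\in\mathcal{S}}\mathcal{X}_i\mapsto\mathcal{X}_e$ is a group homomorphism. In particular, the map $x_\mathcal{S}\mapsto X_{e^*}=\phi_{g e^*}(x_\mathcal{S})$ is a homomorphism, so its kernel $K=\ker(\phi_{g e^*})$ is a subgroup of $\prod_{i\in\mathcal{S}}\mathcal{X}_i$, and the cosets of $K$ partition $\mathcal{X}_\mathcal{S}$ into exactly $|\mathrm{Im}(\phi_{g e^*})|\le|\mathcal{X}_{e^*}|$ blocks of equal size. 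The natural choice is to let $Y$ be the coset of $K$ containing $X_\mathcal{S}$, i.e. $f_Y(x_\mathcal{S})=x_\mathcal{S}+K$. Then condition (A) is immediate since $X_{e^*}$ is constant on each coset, and $|A(y)|=|K|\ge|\mathcal{X}_\mathcal{S}|/|\mathcal{X}_{e^*}|$ for every $y$.

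The main work is to verify conditions (B) and the first part of (C) — that conditioned on $Y$ the sources remain independent and uniform, with each $|A_i(y')|\ge|\mathcal{X}_i|/|\mathcal{X}_{e^*}|$. This requires showing that the kernel $K$ is itself a product subgroup, $K=\prod_{i\in\mathcal{S}}K_i$ with $K_i\le\mathcal{X}_i$; then every coset $x_\mathcal{S}+K$ is a product of cosets $x_i+K_i$, so conditioned on $Y$ the coordinates are independent and uniform on the $K_i$-cosets, giving $|A_i(y)|=|K_i|$ for all $y$ and $\prod_i|K_i|=|K|\ge|\mathcal{X}_\mathcal{S}|/|\mathcal{X}_{e^*}|$ — but I still need $|K_i|\ge|\mathcal{X}_i|/|\mathcal{X}_{e^*}|$ coordinate-by-coordinate. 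The product structure of $K$ follows because $\phi_{g e^*}$ is a homomorphism on a direct product: writing $\phi_{g e^*}(x_\mathcal{S})=\sum_{i\in\mathcal{S}}\psi_i(x_i)$ where $\psi_i=\phi_{g e^*}|_{\mathcal{X}_i}$ is the restriction obtained by setting all other coordinates to the identity, a more careful argument is needed since $\ker(\sum_i\psi_i)$ need not be the product of the $\ker\psi_i$ in general. I expect this to be the crux: I would handle it by passing instead to the subgroup $K'=\prod_i\ker\psi_i\subseteq K$, on which $\phi_{g e^*}$ is certainly constant and which is a genuine product subgroup; it remains to check $|\ker\psi_i|\ge|\mathcal{X}_i|/|\mathcal{X}_{e^*}|$, which holds because $\psi_i:\mathcal{X}_i\mapsto\mathcal{X}_{e^*}$ is a homomorphism and hence $|\mathcal{X}_i|/|\ker\psi_i|=|\mathrm{Im}\,\psi_i|\le|\mathcal{X}_{e^*}|$. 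With $\mathcal{X}_i'=\ker\psi_i$ and $\mathcal{X}_\mathcal{S}'=\prod_i\ker\psi_i$ one is precisely in the setting of Corollary~\ref{corollary:1}, so conditions 1) and 2) there are met.

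Finally, for the second part of (C) (equivalently condition 3) of Corollary~\ref{corollary:1}), I would use the hypothesis $\epsilon<\tfrac12$ together with an averaging argument over the cosets of $K'$. Since the bad set has $|\mathcal{X}_\mathcal{S}^B|=\epsilon|\mathcal{X}_\mathcal{S}|$ and the cosets of $K'$ partition $\mathcal{X}_\mathcal{S}$ into equal-size blocks, the average fraction of bad elements per coset is $\epsilon$; a counting/pigeonhole argument then produces a coset $y'$ with $|A^B(y')|\le\epsilon|A(y')|$ (in fact one can take the coset minimizing the bad fraction, and one must check that $\epsilon<\tfrac12$ is what is needed to make the restricted code's decoders — which decode by a coset-shift of the original decoders — still correct on the good elements; this is exactly where the argument of Theorem~3 of \cite{wei2017effect} used the same threshold). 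Combining, $Y$ and $y'$ satisfy all three conditions of Theorem~\ref{claim:erY} for every edge $e^*$ and every feasible $(\epsilon,\mathbf{R},n)$, so the local edge removal statement holds on every edge, and hence the (full) edge removal statement holds on $\mathcal{I}$. The one subtlety I would be most careful about is making sure the product-subgroup reduction $K'\subseteq K$ does not lose the edge-capacity savings — but since $|\ker\psi_i|\ge|\mathcal{X}_i|/|\mathcal{X}_{e^*}|$ holds for each $i$ individually, it does not.
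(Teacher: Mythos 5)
Your proposal has a genuine gap at its very first step: the model you assume is not the one in which this corollary is stated. In the paper (which takes its definitions from \cite{wei2017effect}), an Abelian group network code means the random variables $\{X_f\}$ are \emph{group characterizable}: each $X_f$ is the coset $gG_f$ of a subgroup $G_f$ of a single finite Abelian group $G$, with $g$ uniform on $G$. It does \emph{not} say that each source alphabet $\mathcal{X}_i$ is a group and each global encoding function is a homomorphism on $\prod_i\mathcal{X}_i$ -- that is essentially the CWL model, and the remark in Section~\ref{sec:CWL} stresses exactly this distinction (the paper proves CWL $\Rightarrow$ group characterizable, not the converse). So your opening sentence needs a bridging argument before anything else: using the normalization $\cap_{i\in\mathcal{S}}G_i=\{\mathbf{i}\}$ and the feasibility conditions (uniform, independent sources), one shows $g\mapsto(gG_i)_{i\in\mathcal{S}}$ is a bijective homomorphism of $G$ onto $\prod_i G/G_i$, so the induced map $(gG_i)_i\mapsto gG_{e^*}\in G/G_{e^*}$ is well defined and a homomorphism (Abelianness is what makes the quotients groups and the map CWL). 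Once that bridge is in place, the rest of your argument -- replacing $\ker\phi_{ge^*}$ by the product subgroup $\prod_i\ker\psi_i$, conditioning on its cosets, bounding $|\ker\psi_i|\ge|\mathcal{X}_i|/|\mathcal{X}_{e^*}|$ by the first isomorphism theorem, and averaging over equal-size cosets to get $|A^B(y')|\le\epsilon|A(y')|$ -- is sound, but it is essentially a rederivation of the proof of Theorem~\ref{claim:ER_linear}; you could simply invoke that theorem at that point.

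Relatedly, you never actually use the hypothesis $\epsilon<\tfrac12$, and your speculation about coset-shift decoders is not where it enters. The paper's proof stays in the coset model: it first shows (Theorem~\ref{theorem:group} in the Appendix) that for group codes $\epsilon<\tfrac12$ forces $G_{\In(t)}\subseteq G_i$ for every demand, hence the code is in fact zero-error, and only then constructs $Y$ from the subgroup $G'=\prod_{i\in\mathcal{S}}(G_{e^*}\cap H_i)$ with $H_i=\cap_{j\in\mathcal{S}\setminus i}G_j$, verifying conditions (A)--(C) of Theorem~\ref{claim:erY} via the group-characterization entropy identities and the inequality $\log\frac{|G'|}{|G_i\cap G'|}\ge\log\frac{|G_{e^*}|}{|G_i|}$ taken from the proof of Theorem 3 of \cite{wei2017effect}; that machinery is applied to the zero-error group code, which is exactly where $\epsilon<\tfrac12$ is consumed. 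So either supply the coset-to-CWL translation and then handle arbitrary $\epsilon$ directly through Theorem~\ref{claim:ER_linear} (or Corollary~\ref{corollary:1}), or follow the paper's reduction; as written, the model translation is missing and the role of $\epsilon<\tfrac12$ is unexplained.
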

Corollary~\ref{claim:erXe} claims that the local edge removal statement holds on an edge $e^*$ if the edge random variable $X_{e^*}$ satisfies certain conditions.
We may notice that here we only consider $\epsilon=0$.
The detailed discussion and proof are given in Appendix.
\begin{corollary} \label{claim:erXe}
    Let $\mathcal{I}$ be a network coding instance which is $(0,\mathbf{R},n)$-feasible with a corresponding network code $\{X_f:f\in\mathcal{S}\cup\mathcal{E}\}$. The local edge removal statement holds on instance $\mathcal{I}$, edge $e^*$, and parameters $(0,\mathbf{R},n)$ if $H(X_\mathcal{S}|X_{e^*}) =\sum_{i\in\mathcal{S}}H(X_i|X_{e^*})$.
\end{corollary}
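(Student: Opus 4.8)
The plan is to invoke Theorem~\ref{claim:erY} with the auxiliary random variable chosen to be $Y = X_{e^*}$ itself. Since $X_{e^*} = \phi_{ge^*}(X_{\mathcal{S}})$ is a deterministic function of $X_{\mathcal{S}}$, the choice $f_Y = \phi_{ge^*}$ is legitimate; condition~(A) of Theorem~\ref{claim:erY} then holds trivially (with $g_Y$ the identity map), and condition~(B) is verbatim the hypothesis $H(X_{\mathcal{S}}\mid X_{e^*}) = \sum_{i\in\mathcal{S}} H(X_i\mid X_{e^*})$. So everything reduces to producing a value $y'$ in the support $\mathcal{Y}$ of $X_{e^*}$ satisfying condition~(C). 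Because we are in the regime $\epsilon = 0$, every $x_{\mathcal{S}}$ is decoded correctly, so $A^{B}(y) = \emptyset$ for all $y$ and the requirement $|A^{B}(y')| \le \epsilon |A(y')|$ is vacuous; the entire content of the argument is to find $y'$ with $|A_i(y')| \ge |\mathcal{X}_i|/|\mathcal{X}_{e^*}|$ holding simultaneously for \emph{every} $i\in\mathcal{S}$. This is also exactly where $\epsilon = 0$ is used: for $\epsilon>0$ one would need a single fiber that is both coordinate-wise large and mostly correctly decoded, and the counting step below only delivers the first property.

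The first step is to record the structure of the partition $\{A(y)\}_{y\in\mathcal{Y}}$ of $\mathcal{X}_{\mathcal{S}}$ induced by $Y$. Running the computation from the proof of Theorem~\ref{claim:erY} (the chain of identities culminating in~(\ref{equa:condition})), condition~(B) forces $H(X_{\mathcal{S}}\mid Y = y) = \sum_{i\in\mathcal{S}} H(X_i\mid Y=y)$ for \emph{each} $y\in\mathcal{Y}$, i.e.\ conditioned on $Y=y$ the sources are independent. Since $X_{\mathcal{S}}$ is uniform on $\mathcal{X}_{\mathcal{S}}$ and $Y$ is a deterministic function of $X_{\mathcal{S}}$, conditioning on $Y = y$ makes $X_{\mathcal{S}}$ uniform on $A(y)$; as the support of a vector of independent discrete random variables equals the product of the supports of its coordinates, this yields the rectangular decomposition $A(y) = \prod_{i\in\mathcal{S}} A_i(y)$, and in particular $|A(y)| = \prod_{i\in\mathcal{S}}|A_i(y)|$, for every $y\in\mathcal{Y}$.

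The second step is the counting argument. The sets $A(y)$, $y\in\mathcal{Y}$, partition $\mathcal{X}_{\mathcal{S}}$, and $|\mathcal{Y}|\le|\mathcal{X}_{e^*}|$ since $X_{e^*}$ takes values in $\mathcal{X}_{e^*}$. Suppose no suitable $y'$ existed: then for each $y\in\mathcal{Y}$ there is a coordinate $i(y)$ with $|A_{i(y)}(y)| < |\mathcal{X}_{i(y)}|/|\mathcal{X}_{e^*}|$. Plugging this into the rectangular decomposition and bounding every other factor by $|\mathcal{X}_{i'}|$ gives $|A(y)| = \prod_{i\in\mathcal{S}}|A_i(y)| < |\mathcal{X}_{\mathcal{S}}|/|\mathcal{X}_{e^*}|$ for each $y$. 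Summing over $\mathcal{Y}$ then yields $|\mathcal{X}_{\mathcal{S}}| = \sum_{y\in\mathcal{Y}}|A(y)| < |\mathcal{Y}|\cdot|\mathcal{X}_{\mathcal{S}}|/|\mathcal{X}_{e^*}| \le |\mathcal{X}_{\mathcal{S}}|$, a contradiction. Hence some $y'\in\mathcal{Y}$ has $|A_i(y')|\ge|\mathcal{X}_i|/|\mathcal{X}_{e^*}|$ for all $i\in\mathcal{S}$; condition~(C) is met, and Theorem~\ref{claim:erY} delivers the local edge removal statement. (One could equally route this through Corollary~\ref{corollary:1}, taking $\mathcal{X}_{\mathcal{S}}' = A(y')$.)

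I expect the only real work to be in the second step. Conditions~(A) and~(B) are immediate once one thinks of taking $Y = X_{e^*}$, so the genuinely new ingredient is combining the rectangular shape of the fibers (inherited from the independence hypothesis, via the argument already present in the proof of Theorem~\ref{claim:erY}) with a volume/pigeonhole estimate, and verifying that this really hands back a single fiber that is simultaneously large in all $|\mathcal{S}|$ coordinates rather than merely large in product.
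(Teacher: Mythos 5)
Your proposal is correct and follows essentially the same route as the paper: set $Y=X_{e^*}$ so that conditions (A) and (B) of Theorem~\ref{claim:erY} are immediate, use the hypothesis-forced conditional independence to get the rectangular fiber structure $|A(y)|=\prod_{i\in\mathcal{S}}|A_i(y)|$, and then a counting argument over the at most $|\mathcal{X}_{e^*}|$ fibers to exhibit a $y'$ large in every coordinate, with $\epsilon=0$ making the error part of condition (C) vacuous. The only (cosmetic) difference is that the paper averages directly to find a fiber with $|A(y')|\geq|\mathcal{X}_\mathcal{S}|/|\mathcal{X}_{e^*}|$ and then deduces per-coordinate largeness from $|A_i(y')|\leq|\mathcal{X}_i|$, whereas you run the same count contrapositively.
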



\section{CWL functions}\label{sec:CWL}

In this work we consider the following special family of encoding functions.

\begin{definition}[Coordinate-wise linear]\label{def:linear}
    For any $e\in\mathcal{E}$, a global encoding function $\phi_{ge}:(\mathcal{X}_1,\ldots,\mathcal{X}_{|\mathcal{S}|})\mapsto \mathcal{X}_e$ is called \emph{coordinate-wise linear} (CWL) if and only if there exist finite groups $\{G_f:f\in\mathcal{S}\cup\{e\}\}$, with group operation ``$\stackrel{f}{\circ}$'' defined on $G_f$, where $G_f=\mathcal{X}_f$ for $f\in\mathcal{S}$ and $G_e=support( \phi_{ge}) \subseteq \mathcal{X}_e$ is the support of $\phi_{ge}$, such that $\phi_{ge}$ is a homomorphism from $G_\mathcal{S} = \prod_{i\in\mathcal{S}} G_i$ to $G_e$. Namely, for every $(x_1,\ldots,x_{|\mathcal{S}|})$ and $(x_1',\ldots,x_{|\mathcal{S}|}')$
    we have $\phi_{ge}(x_1\stackrel{1}{\circ}x_1',\ldots,x_{|\mathcal{S}|}\stackrel{|\mathcal{S}|}{\circ} x_{|\mathcal{S}|}') =\phi_{ge}(x_1,\dots,x_{|\mathcal{S}|})\stackrel{e}{\circ} \phi_{ge}(x_1',\dots,x_{|\mathcal{S}|}')$.
    
\end{definition}


In the context of CWL functions, for any $f\in\mathcal{S}\cup\{e\}$, we denote the identity elements of $G_f$ as $\mathbf{i}_f$. For any $\alpha\subseteq \mathcal{S}$, we define $x_\alpha=(x_i)_{i\in\alpha}$ and $G_\alpha=\prod_{i\in\alpha}G_i$. We denote the source message vector $(\mathbf{i}_1,\dots,x_i,\dots,\mathbf{i}_{|\mathcal{S}|})$ (with $x_i$ in the $i$th coordinate and identity elements on all other coordinates) as $(x_i,\mathbf{i}_{\mathcal{S}\setminus i})$. In addition, we define $x_\mathcal{S}\circ x_\mathcal{S}'= (x_1\stackrel{1}{\circ} x_1'),\ldots,(x_{|\mathcal{S}|}\stackrel{|\mathcal{S}|}{\circ} x_{|\mathcal{S}|}')$ and sometimes omit the label on the operation ``$\circ$''. We define $\ker \phi_{ge}=\{x_\mathcal{S}\in G_\mathcal{S}: \phi_{ge}(x_\mathcal{S})=\mathbf{i}_e\}$ and $A(x_e)=\{x_\mathcal{S}\in G_\mathcal{S}: \phi_{ge}(x_\mathcal{S})=x_e\}$. 

\emph{Remark:} 
It is important to understand the relationship between CWL encoding functions and group characterizable encoding functions (see, for example, \cite{chan2008dualities}).
In group characterizable encoding functions as in CWL functions, the random variables involved are characterized by a collection of groups. However, the characterization differs in that, in group characterizable functions the support of the variables involved are associated with certain co-sets of the corresponding groups while in CWL functions they are associated with the group elements themselves.
Nevertheless, CWL functions are a special case of group characterizable functions (see the Appendix for a detailed proof).
We note that while \cite{wei2017effect} proves the edge removal statement for Abelian group codes, the CWL functions assumed here do not necessarily correspond to Abelian groups.

\subsection{CWL functions satisfy Theorem~\ref{claim:erY}}

We now show that CWL functions satisfy the local sufficient condition specified by Theorem~\ref{claim:erY}. Our proof conceptually follows ideas appearing in \cite{jalali2011impact}, which addresses edge removal in the context of linear codes.


\begin{theorem} \label{claim:ER_linear}
    Let $\mathcal{I}$ be $(\epsilon,\mathbf{R},n)$-feasible with a corresponding network code $\{X_f:f\in\mathcal{S}\cup\mathcal{E}\}$. The local edge removal statement holds on $\mathcal{I}$ for edge $e^*$ and parameters $(\epsilon,\mathbf{R},n)$ if the global encoding function {$\phi_{ge^*}$} on edge $e^*\in\mathcal{E}$ is CWL.
\end{theorem}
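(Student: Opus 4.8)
The plan is to construct the auxiliary random variable $Y$ required by Theorem~\ref{claim:erY} directly from the CWL structure of $\phi_{ge^*}$, and then verify conditions (A)--(C). Since $\phi_{ge^*}$ is a homomorphism from $G_{\mathcal{S}} = \prod_{i\in\mathcal{S}} G_i$ onto its support $G_{e^*}$, the natural first move is to set $Y = X_{e^*}$ itself, i.e.\ $f_Y = \phi_{ge^*}$. Condition (A), $H(X_{e^*}\mid Y) = 0$, is then immediate with $g_Y$ the identity. For condition (B), I would argue that conditioning on $X_{e^*} = x_{e^*}$ restricts the source vector to a coset $A(x_{e^*})$ of $\ker\phi_{ge^*}$ in $G_{\mathcal{S}}$, and I would need to show that this coset decomposes as a product set $\prod_{i\in\mathcal{S}} A_i(x_{e^*})$ so that the (uniform) conditional distribution on it factorizes --- this is the ``coordinate-wise'' part of CWL doing its work. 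Concretely, I expect $\ker\phi_{ge^*}$ itself to be a product $\prod_i K_i$ where $K_i = \ker\phi_{ge^*} \cap (G_i, \mathbf{i}_{\mathcal{S}\setminus i})$, because the group operation on $G_{\mathcal{S}}$ is coordinate-wise and $\phi_{ge^*}(x_{\mathcal{S}}) = \bigcirc_{i} \phi_{ge^*}(x_i,\mathbf{i}_{\mathcal{S}\setminus i})$ decomposes the evaluation into a product of the single-coordinate images; from this, every coset of the kernel is likewise a product set, giving (B).

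For condition (C) I would choose $y' = \mathbf{i}_{e^*}$, so that $A(y') = \ker\phi_{ge^*}$. The size bound $|A_i(y')| \ge |\mathcal{X}_i|/|\mathcal{X}_{e^*}|$ should follow from a counting argument on the image of the $i$-th coordinate: the map $x_i \mapsto \phi_{ge^*}(x_i,\mathbf{i}_{\mathcal{S}\setminus i})$ is a homomorphism $G_i \to G_{e^*}$, its kernel is $K_i = A_i(\mathbf{i}_{e^*})$, and by the first isomorphism theorem $|G_i|/|K_i| = |\mathrm{image}| \le |G_{e^*}| \le |\mathcal{X}_{e^*}|$, hence $|A_i(\mathbf{i}_{e^*})| = |K_i| \ge |G_i|/|\mathcal{X}_{e^*}| = |\mathcal{X}_i|/|\mathcal{X}_{e^*}|$. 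The remaining requirement, $|A^B(y')| \le \epsilon|A(y')|$ --- i.e.\ the coset $\ker\phi_{ge^*}$ contains no more than an $\epsilon$-fraction of bad source vectors --- is the part that cannot be argued cosetwise for an arbitrary choice; instead I would invoke an averaging argument over the cosets of $\ker\phi_{ge^*}$. All $|\mathcal{X}_{e^*}|$ cosets $A(x_{e^*})$ partition $\mathcal{X}_{\mathcal{S}}$ and (when nonempty) have equal size, while the total bad fraction is $\epsilon$; hence the average over cosets of $|A^B(x_{e^*})|/|A(x_{e^*})|$ is exactly $\epsilon$, so \emph{some} coset achieves at most $\epsilon$. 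The subtlety is that this good coset may not be $\ker\phi_{ge^*}$ --- it could be some other coset $A(x_{e^*}^{\dagger})$. I would handle this by a relabeling (translation) argument: because $\phi_{ge^*}$ is a homomorphism, left-translation of the source alphabet by any fixed $x_{\mathcal{S}}^{\dagger} \in A(x_{e^*}^{\dagger})$ is a coordinate-wise bijection of $G_{\mathcal{S}}$ that carries $\ker\phi_{ge^*}$ onto $A(x_{e^*}^{\dagger})$; composing the original code with this relabeling yields an equivalent $(\epsilon,\mathbf{R},n)$-feasible code for which the good coset \emph{is} the kernel, and CWL-ness (hence (A), (B), and the size bound in (C)) is preserved. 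Alternatively, and perhaps more cleanly, I would simply take $y' = x_{e^*}^{\dagger}$ directly and redo the size estimate for the coset $A(x_{e^*}^{\dagger})$, which has the same cardinality as the kernel and whose coordinate projections $A_i(x_{e^*}^{\dagger})$ are cosets of $K_i$, hence of size $|K_i| \ge |\mathcal{X}_i|/|\mathcal{X}_{e^*}|$ as before.

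The main obstacle I anticipate is precisely the product-set decomposition of cosets in the non-Abelian setting: I want ``$\phi_{ge^*}$ is a coordinate-wise homomorphism'' to imply ``$\ker\phi_{ge^*}$ and each of its cosets is a product of sets across coordinates,'' and while this is true, the cleanest justification uses that the $i$-th factor subgroups $\widetilde{G}_i = \{(x_i,\mathbf{i}_{\mathcal{S}\setminus i}): x_i\in G_i\}$ commute elementwise inside $G_{\mathcal{S}}$ (they sit in distinct coordinates), so $G_{\mathcal{S}} = \widetilde{G}_1 \cdots \widetilde{G}_{|\mathcal{S}|}$ is an internal direct product and the restriction of $\phi_{ge^*}$ to each $\widetilde{G}_i$ determines the whole map. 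One must be careful that $G_{e^*}$, the image, need not be Abelian, so the factorization $\phi_{ge^*}(x_{\mathcal{S}}) = \phi_{ge^*}(x_1,\mathbf{i}_{\mathcal{S}\setminus 1}) \stackrel{e^*}{\circ} \cdots \stackrel{e^*}{\circ} \phi_{ge^*}(x_{|\mathcal{S}|},\mathbf{i}_{\mathcal{S}\setminus |\mathcal{S}|})$ is in a fixed coordinate order, which is fine for membership-in-kernel purposes since the $\widetilde{G}_i$-images may fail to commute but each coset of $\ker\phi_{ge^*}$ is still a union of $\ker$-translates and the counting goes through. Once the decomposition is in hand, conditions (A)--(C) of Theorem~\ref{claim:erY} all follow from elementary group theory and the averaging argument, and the theorem is proved.
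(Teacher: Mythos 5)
Your construction takes $Y=X_{e^*}$ (i.e.\ $f_Y=\phi_{ge^*}$), and the whole argument then hinges on the claim that $\ker\phi_{ge^*}$, and hence every fiber $A(x_{e^*})$, is a product set $\prod_{i\in\mathcal{S}}A_i(x_{e^*})$. That claim is false, and this is a genuine gap rather than a technicality. Take two binary sources with $\phi_{ge^*}(x_1,x_2)=x_1\oplus x_2$ (a perfectly good CWL function). Then $\ker\phi_{ge^*}=\{(0,0),(1,1)\}$, while the coordinate-wise kernels are $K_1=K_2=\{0\}$, so $\ker\phi_{ge^*}\neq K_1\times K_2$; the factorization $\phi_{ge^*}(x_\mathcal{S})=\bigcirc_i\phi_{ge^*}(x_i,\mathbf{i}_{\mathcal{S}\setminus i})$ only says the product of the coordinate images is the identity, and cancellation across coordinates is possible. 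Consequently condition (B) of Theorem~\ref{claim:erY} fails for your choice of $Y$: in the example, conditioning on $X_{e^*}=0$ makes $X_1$ and $X_2$ perfectly correlated, so $H(X_\mathcal{S}\mid X_{e^*})=1\neq 2=\sum_i H(X_i\mid X_{e^*})$, and equivalently $|A(y')|<\prod_i|A_i(y')|$, which is exactly the identity the restriction-to-a-fiber construction needs. Note that the paper's Corollary~\ref{claim:erXe} treats precisely the choice $Y=X_{e^*}$, but only under the \emph{additional hypothesis} $H(X_\mathcal{S}\mid X_{e^*})=\sum_i H(X_i\mid X_{e^*})$ --- the very property you are trying to get for free from CWL, and which CWL does not give.

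The fix, which is what the paper's proof of Theorem~\ref{claim:ER_linear} does, is to use a strictly finer auxiliary variable: let $Y=(Y_1,\dots,Y_{|\mathcal{S}|})$, where $Y_i$ records the equivalence class of $x_i$ under $x_i\sim x_i'$ iff $\phi_{ge^*}(x_i,\mathbf{i}_{\mathcal{S}\setminus i})=\phi_{ge^*}(x_i',\mathbf{i}_{\mathcal{S}\setminus i})$ (equivalently, $Y_i$ is the coordinate-wise image itself). Then every fiber of $Y$ is a product set $\prod_i A_i(y_i)$ \emph{by construction}, so condition (B) is immediate; $X_{e^*}$ is still determined by $Y$ via the CWL factorization, giving (A); and the equal-size property of the classes $A_i(y_i)$ (the paper's Lemma~\ref{lemma:sameSize}, a coordinate-wise translation argument close in spirit to your coset-translation step) combined with your averaging argument yields (C), including the bound $|A_i(y')|\geq|\mathcal{X}_i|/|\mathcal{X}_{e^*}|$ since each coordinate has at most $|G_{e^*}|$ classes. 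Your translation/relabeling idea and the first-isomorphism-theorem counting are sound as far as they go, but they cannot repair (B) as long as $Y$ is the coarse variable $X_{e^*}$.
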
 

\begin{proof}[Proof of Theorem~\ref{claim:ER_linear}]
Let $\{G_f:f\in\mathcal{S}\cup e^*\}$ be the groups corresponding to $\phi_{ge^*}$. 
Using $\phi_{ge^*}$, we define a random variable $Y$ and show that $Y$ satisfies the conditions of Theorem~\ref{claim:erY}. For each $i$, we start by partitioning the set $G_i$ into subsets. 
The number of subsets $n_i$ in our partition of $G_i$ determines $\mathcal{Y}_i=[n_i]$. 
The support set $\mathcal{Y}$ of $Y$ is set to $\prod_{i\in\mathcal{S}} \mathcal{Y}_i$. 
For each $i$, consider the equivalence relation $x_i \stackrel{i}{\sim} x'_i$ in which two elements of $G_i$ are equivalent if and only if $\phi_{ge^*}(x_i,\mathbf{i}_{\mathcal{S}\setminus i}) = \phi_{ge^*}(x'_i,\mathbf{i}_{\mathcal{S}\setminus i})$. 
Each $G_i$ is partitioned according the equivalence relation $\stackrel{i}{\sim}$. 
Denote the partition of $G_i$ by sets $A_i(1), \dots, A_i(n_i)$. 
Now, $Y=Y_{\mathcal{S}}$ is defined using a deterministic function $f_Y$ of $X_{\mathcal{S}}$ that maps each $(x_1,\dots,x_{|\mathcal{S}|})$ to the corresponding $y_\mathcal{S}=(y_1,\dots,y_{|\mathcal{S}|})$ for which $x_i \in A_i(y_i)$. 
The pre-image $A(y_\mathcal{S})$ of $y_\mathcal{S}$ under mapping $f_Y$ is the product set $\prod_{i \in \mathcal{S}}A_i(y_i)$ of $\mathcal{X}_{\mathcal{S}}$.

Notice that for any $x_{\mathcal{S}}=(x_1,\dots,x_{|\mathcal{S}|})$, $x'_{\mathcal{S}}=(x_1',\dots,x'_{|\mathcal{S}|})$ in the pre-image $A(y_{\mathcal{S}})$ of $y_{\mathcal{S}}=(y_1,\dots,y_{|\mathcal{S}|})$ it holds that $\phi_{ge^*}(x_{\mathcal{S}}) = \phi_{ge^*}(x'_{\mathcal{S}})$. This follows from the fact that $\phi_{ge^*}$ is CWL. Namely,
$ \phi_{ge^*}(x_1,\dots,x_{|\mathcal{S}|})=\phi_{ge^*}(x_1,\mathbf{i}_{\mathcal{S}-1})\circ \dots \circ \phi_{ge^*}(x_{|\mathcal{S}|},\mathbf{i}_{\mathcal{S}-|\mathcal{S}|})= \phi_{ge^*}(x'_1,\mathbf{i}_{\mathcal{S}-1})\circ \dots \circ \phi_{ge^*}(x'_{|\mathcal{S}|},\mathbf{i}_{\mathcal{S}-|\mathcal{S}|})=\phi_{ge^*}(x'_1,\dots,x'_{|\mathcal{S}|})$.
Thus, for any $y_\mathcal{S}=(y_1,\dots,y_{|\mathcal{S}|})$, the value of $X_{e^*}$ conditioned on $y_\mathcal{S}$ is fixed. This in turn implies condition (A) in Theorem~\ref{claim:erY}.

By our definition of $Y$, for every $y_\mathcal{S}$, the pre-image $A(y_\mathcal{S})$ of $f_Y$ is the product set $\prod_{i \in \mathcal{S}}A_i(y_i)$ of $\mathcal{X}_{\mathcal{S}}$. 
This implies that $H(X_\mathcal{S}|Y = y_\mathcal{S}) = \sum_{i\in\mathcal{S}}H(X_i|Y=y_\mathcal{S})$, which in turn implies condition (B) in Theorem~\ref{claim:erY}. 

Before studying condition (C) in Theorem~\ref{claim:erY} we present the following technical lemma.


\begin{lemma} \label{lemma:sameSize}
For any $i\in\mathcal{S}$ and for any $y_i, \bar{y} _i\in \mathcal{Y}_i$ it holds that $|A_i(y_i)|=|A_i(\bar{y}_i)|$.
\end{lemma}

\begin{proof}
Let $y_i\in\mathcal{Y}_i$ and fix $\bar{y}_i\in\mathcal{Y}_i$ to be such that $\phi_{ge^*}(\delta_i,\mathbf{i}_{\mathcal{S}\setminus i})=\mathbf{i}_{e^*}$ for all $\delta_i\in A_i(\bar{y}_i)$. As $y_i$ is any element of $\mathcal{Y}_i$, it suffices to show that $|A_i(y_i)|=|A_i(\bar{y}_i)|$.
Fix any $x_i\in A(y_i)$ and let $x_{e^*}=\phi_{ge^*}(x_i,\mathbf{i}_{\mathcal{S}\setminus i})$. 
As function $\phi_{ge^*}$ is CWL, for any $\delta_i\in A_i(\bar{y}_i)$ we have $\phi_{ge^*}(x_i\circ \delta_i,\mathbf{i}_{\mathcal{S}\setminus i})=\phi_{ge^*}(x_i,\mathbf{i}_{\mathcal{S}\setminus i})\circ \phi_{ge^*}(\delta_i,\mathbf{i}_{\mathcal{S}\setminus i})=x_{e^*}\circ \mathbf{i}_{e^*}=x_{e^*}$. 
Thus, $x_i\circ \delta_i\in A_i(y_i)$, implying that $|A_i(y_i)| \geq |A_i(\bar{y}_i)|$.

Fix any $x_i\in A(y_i)$.
By the discussion above, for any $\delta_i\in A_i(\bar{y}_i)$ we have $x_i\circ \delta_i\in A_i(y_i)$. 
If $|A_i(y_i)|>|A_i(\bar{y}_i)|$, there exists $x_i'\in A_i(y_i)$ such that $x_i'\neq x_i\cdot \delta_i$ for any $\delta_i\in A_i(\bar{y}_i)$.
In other words, there exists $x_i'=x_i\circ\delta_i'\in A_i(y_i)$ and $\delta_i' \in G_i\setminus A_i(\bar{y}_i)$.

Then we have $x_{e^*}=\phi_{ge^*}(x_i\circ \delta_i',\mathbf{i}_{\mathcal{S}\setminus i})=\phi_{ge^*}(x_i,\mathbf{i}_{\mathcal{S}\setminus i})\circ \phi_{ge^*}(\delta_i',\mathbf{i}_{\mathcal{S}\setminus i})=x_{e^*}\circ \phi_{ge^*}(\delta_i',\mathbf{i}_{\mathcal{S}\setminus i})\neq x_{e^*}$ which gives a contradiction.
Accordingly, for any $y_i\in \mathcal{Y}_i$, $|A_i(y_i)|=|A_i(\bar{y}_i)|$.
\end{proof}


Lemma~\ref{lemma:sameSize} implies that for all $y_\mathcal{S}\in \mathcal{Y}$ the size of the set $A(y_\mathcal{S})$ is identical. 
By the assumption of the theorem, there exist at least $(1-\epsilon)|G_{\mathcal{S}}|$ ``good'' elements in $G_\mathcal{S}$. 
Thus, by an averaging argument, there exists some $y'=y'_\mathcal{S}\in \mathcal{Y}$ such that
\begin{equation}\label{equa:smallerror}
    \frac{|A^G(y')|}{|A(y')|}\geq (1-\epsilon).
\end{equation}
Here $A^G(y')=\{x_\mathcal{S}\in A(y'):\mbox{$x_\mathcal{S}$ is good}\}$.
Equation~(\ref{equa:smallerror}) satisfies the second part of condition (C) in Theorem~\ref{claim:erY}.
For the first part of condition (C), our definitions imply $|\mathcal{Y}_i| \leq |G_{e^*}|$. In addition, $\cup_{y_i \in \mathcal{Y}_i} A_i(y_i) = G_i$. Thus, by Lemma~\ref{lemma:sameSize}, $|A_i(y')| \geq |G_i|/|G_{e^*}|$.
Here, for $y'=(y'_1,\dots y'_{\mathcal{S}})$, $A_i(y')=A_i(y_i')$.
This suffices to prove our assertion.
\end{proof}


\subsection{Piece-wise CWL}

\begin{definition}[Piece-wise CWL] \label{def:piece-linear}
    For any $e\in\mathcal{E}$, a global encoding function $\phi_{ge}$ is called $K$-piece-wise CWL if and only if there exists a set of CWL functions $\{\phi_{ge}^{(k)}:k\in [K]\}$, all over groups $\{G_i: i\in\mathcal{S}\}$ and $G_e$, and a partition $\{S^{(k)}:k\in [K]\}$ of $\prod_{i\in\mathcal{S}}G_i$ such that $\phi_{ge} (x_1,\ldots,x_{|\mathcal{S}|}) = \phi_{ge}^{(k)}(x_1,\ldots,x_{|\mathcal{S}|})$ if and only if $(x_1,\ldots,x_{|\mathcal{S}|})\in S^{(k)}$.
\end{definition}

We now study piece-wise CWL function in the context of edge-removal. We show below that a certain structured class of piece-wise CWL functions implies a slightly relaxed local edge-removal statement. The theorem is limited to the $0$-error setting. We discuss this limitation and give the proof in the Appendix.

\begin{theorem}\label{claim:ER_piecelinear}
    Let $\mathcal{I}=(\mathcal{N},\mathcal{S},\mathcal{T},\mathcal{M})$ be a network coding instance which is $(0,\mathbf{R},n)$-feasible with a corresponding network code that has a global encoding function $\phi_{ge^*}$ on a given edge $e^*$ which is $K$-piece-wise CWL. Let $\mathcal{I}'=(\mathcal{N}',\mathcal{S},\mathcal{T},\mathcal{M})$ be a new instance obtained by removing the edge $e^*$ from the network $\mathcal{N}$. If for every $k\in [K]$, the sub-domain $S^{(k)}$ of $\phi_{ge^*}^{(k)}$ equals a product space $\prod_{i\in\mathcal{S}}S_i^{(k)}$, then $\mathcal{I}'$ is $(0,\mathbf{R}',n)$-feasible, where $\mathbf{R}'=(R_i')_{i\in\mathcal{S}}$ and $R_i'\geq R_i-R_{e^*}-(\log K)/n$.
\end{theorem}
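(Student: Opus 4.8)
The plan is to isolate a single ``piece'' of the domain, and inside it produce a product subset $\mathcal{X}_\mathcal{S}'=\prod_{i\in\mathcal{S}}\mathcal{X}_i'\subseteq\mathcal{X}_\mathcal{S}$ on which $X_{e^*}$ is constant and each $|\mathcal{X}_i'|$ is not much smaller than $|\mathcal{X}_i|/|\mathcal{X}_{e^*}|$; then feed $\mathcal{X}_\mathcal{S}'$ into the code-restriction construction underlying Corollary~\ref{corollary:1} and keep track of the (slightly larger) rate loss.

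By pigeonhole, some $k^*\in[K]$ satisfies $|S^{(k^*)}|=\prod_{i\in\mathcal{S}}|S_i^{(k^*)}|\geq\frac{1}{K}\prod_{i\in\mathcal{S}}|G_i|=\frac{1}{K}|\mathcal{X}_\mathcal{S}|$. Since $|S_i^{(k^*)}|\leq|G_i|=|\mathcal{X}_i|$ for every $i$, dividing this product inequality by $\prod_{j\neq i}|S_j^{(k^*)}|\leq\prod_{j\neq i}|\mathcal{X}_j|$ yields the per-coordinate bound $|S_i^{(k^*)}|\geq|\mathcal{X}_i|/K$ for each $i\in\mathcal{S}$. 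On the sub-domain $S^{(k^*)}$ we have $\phi_{ge^*}=\phi_{ge^*}^{(k^*)}$, a CWL function, so I would run the construction from the proof of Theorem~\ref{claim:ER_linear} applied to $\phi_{ge^*}^{(k^*)}$: for each $i$, partition $G_i$ into the classes $A_i(1),\dots,A_i(n_i)$ of the equivalence relation $x_i\sim x_i'\Leftrightarrow\phi_{ge^*}^{(k^*)}(x_i,\mathbf{i}_{\mathcal{S}\setminus i})=\phi_{ge^*}^{(k^*)}(x_i',\mathbf{i}_{\mathcal{S}\setminus i})$; exactly as there, $n_i\leq|\mathcal{X}_{e^*}|$.

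Now for each $i$ choose the class index $y_i^*$ maximizing $|A_i(y_i^*)\cap S_i^{(k^*)}|$; averaging over the $n_i\leq|\mathcal{X}_{e^*}|$ classes gives $|A_i(y_i^*)\cap S_i^{(k^*)}|\geq|S_i^{(k^*)}|/|\mathcal{X}_{e^*}|\geq|\mathcal{X}_i|/(K|\mathcal{X}_{e^*}|)$. Put $\mathcal{X}_i'=A_i(y_i^*)\cap S_i^{(k^*)}$ and $\mathcal{X}_\mathcal{S}'=\prod_{i\in\mathcal{S}}\mathcal{X}_i'$. Because $S^{(k^*)}$ is assumed to be a product set, $\mathcal{X}_\mathcal{S}'\subseteq S^{(k^*)}$, so $\phi_{ge^*}$ agrees with $\phi_{ge^*}^{(k^*)}$ on $\mathcal{X}_\mathcal{S}'$; and since every coordinate of any $x_\mathcal{S}\in\mathcal{X}_\mathcal{S}'$ lies in the single class $A_i(y_i^*)$, the CWL decomposition $\phi_{ge^*}^{(k^*)}(x_\mathcal{S})=\phi_{ge^*}^{(k^*)}(x_1,\mathbf{i}_{\mathcal{S}\setminus 1})\circ\cdots\circ\phi_{ge^*}^{(k^*)}(x_{|\mathcal{S}|},\mathbf{i}_{\mathcal{S}\setminus|\mathcal{S}|})$ shows $X_{e^*}$ is constant on $\mathcal{X}_\mathcal{S}'$. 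As the original code is zero-error, $\mathcal{X}_\mathcal{S}'\subseteq\mathcal{X}_\mathcal{S}=\mathcal{X}_\mathcal{S}^G$, so hypotheses (1) and (3) of Corollary~\ref{corollary:1} hold and (2) holds in the weakened form $|\mathcal{X}_i'|\geq|\mathcal{X}_i|/(K|\mathcal{X}_{e^*}|)$. Restricting every $X_f$ to source inputs in $\mathcal{X}_\mathcal{S}'$ (as in the proof of Theorem~\ref{claim:erY}) then still yields a valid zero-error code for $\mathcal{I}'$, now with $nR_i'=\log|\mathcal{X}_i'|\geq\log|\mathcal{X}_i|-\log|\mathcal{X}_{e^*}|-\log K=n(R_i-R_{e^*})-\log K$, i.e.\ $R_i'\geq R_i-R_{e^*}-(\log K)/n$.

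The step I expect to be the crux is choosing one equivalence class per source while controlling its intersection with $S_i^{(k^*)}$ for all $i$ at once: this is precisely where the product-set hypothesis on each $S^{(k)}$ is indispensable, since it is what forces $\prod_i\bigl(A_i(y_i^*)\cap S_i^{(k^*)}\bigr)$ to remain inside the single piece $S^{(k^*)}$ and hence makes $\phi_{ge^*}$ genuinely CWL there. The same structure also explains the restriction to $\epsilon=0$: the argument averages twice (first over pieces, then over classes within each coordinate of the chosen piece), and there is no evident way to jointly bound how many ``bad'' inputs survive into $\mathcal{X}_\mathcal{S}'$, so a positive-error variant would lose control of the decoding error. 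A minor point to handle carefully is the bookkeeping that turns the product bound $|S^{(k^*)}|\geq|\mathcal{X}_\mathcal{S}|/K$ into the per-coordinate bound $|S_i^{(k^*)}|\geq|\mathcal{X}_i|/K$, which relies on $|S_i^{(k^*)}|\leq|\mathcal{X}_i|$.
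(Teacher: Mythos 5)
Your proposal is correct and follows essentially the same route as the paper's proof: pigeonhole to the largest piece $S^{(k)}$, a per-coordinate bound $|S_i^{(k)}|\geq|\mathcal{X}_i|/K$, a second averaging within each source over the single-coordinate equivalence classes of $\phi_{ge^*}^{(k)}$, and restriction of the code to the resulting product set on which the CWL decomposition forces $X_{e^*}$ to be constant, yielding the $\log K$ rate loss. The only (cosmetic) difference is that you maximize $|A_i(y_i)\cap S_i^{(k)}|$ directly, which lets you skip the equal-cell-size Lemma~\ref{lemma:sameSize} that the paper invokes for its averaging step.
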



\section{On proving the edge-removal statement through the local lens of CWL functions}
\label{sec:nonlinear}

In this section we explore the possibility of proving the edge-removal statement through the local lens of CWL functions. Namely, we observe, via Theorem~\ref{claim:ER_linear}, that the following CWL statement, if true, implies the edge-removal statement.

\begin{statement}[CWL statement]\label{statement:task}
    Let $\mathcal{I}$ be $(\epsilon,\mathbf{R},n)$-feasible. Let $e^*$ be an edge in $\mathcal{I}$. Then there exists an $(\epsilon,\mathbf{R},n)$ coding scheme for $\mathcal{I}$ whose global encoding function on $e^*$ is CWL.
\end{statement}

For any instance $\mathcal{I}$ for which linear encoding functions are optimal, the CWL statement (Statement~\ref{statement:task}) above clearly holds. We thus turn to study the validity of the CWL statement on instances for which linear encoding functions are sub-optimal. For all previous works that we are aware of that present network coding instances that are $(\epsilon,\mathbf{R},n)$-feasible with non linear codes but not $(\epsilon,\mathbf{R},n)$-feasible with linear codes, we ask whether the CWL statement is true or not. Specifically, we ask if for any given edge $e^*$ the non-linear $(\epsilon,\mathbf{R},n)$-feasible network code presented in these works can be modified to give an $(\epsilon,\mathbf{R},n)$-feasible scheme in which the encoding on edge $e^*$ is CWL, implying that the CWL statement holds for the instance and rate vector under study. In the case studies from \cite{dougherty2005insufficiency,chan2008dualities,connelly2017class} we are able to modify the non-linear coding scheme appropriately, thus supporting the CWL statement. However we are not able to prove (or disprove) the same for the instance given in \cite{blasiak2011lexicographic}.
Details appear in the Appendix.
Whether the CWL statement holds for all network coding instances (implying that the edge-removal statement holds) is left open in this work.


\section{Conclusion}
\label{sec:conc}
In this work, we study the edge removal problem by taking a local perspective in which we focus solely on the function carried on the removed edge. Through our local perspective, we generalize (and unify) previous results by giving a sufficient condition that captures a broad range of settings including that of CWL and certain piece-wise CWL functions. We study the possibility in proving the edge-removal statement by modifying general coding solutions to have CWL encoding functions on a given edge $e^*$. Whether any network code can be modified as above remains open in this work, as does the correctness of the edge-removal statement.


\section*{Acknowledgment}
Work supported in part by NSF grants CCF-1526771, CCF-1527524 and CCF-1817241.

\bibliographystyle{IEEEtran}
\bibliography{bibliography}



\appendix


\section*{I. Proof of Corollary 2}

We apply the model and definitions from \cite{wei2017effect}. 

\begin{proof}[Proof of Corollary 2]
We show that for any edge $e^*\in\mathcal{E}$ there exists a random variable $Y$ which satisfies all three conditions in Theorem~\ref{claim:erY}, which implies that the edge removal statement holds on $\mathcal{I}$.
By Theorem~\ref{theorem:group} proven later in the Appendix, for $\epsilon<\frac{1}{2}$, any network instance $\mathcal{I}$ which is $(\epsilon,\mathbf{R},n)$-feasible by a group network code is also $(0,\mathbf{R},n)$-feasible.
Thus, without loss of generality, in the following discussion let $\epsilon=0$.
For any edge $e^*\in\mathcal{E}$, the global encoding function $\phi_{ge^*}$ is Abelian group characterizable such that the edge message $X_{e^*}$ and source messages $\{X_i:i\in\mathcal{S}\}$ can be characterized by a finite Abelian group $G$ with subgroups $G_{e^*}$ and $\{G_i:i\in\mathcal{S}\}$, respectively. 
Without loss of generality (see \cite{wei2017effect}) we assume $G_{\mathcal{S}}=\cap_{i\in\mathcal{S}} G_i=\{\mathbf{i}\}$, here $\mathbf{i}$ is the identity element of $G$.
Using $\phi_{ge^*}$ we define a random variable $Y$ and show that $Y$ satisfies the conditions in Theorem~\ref{claim:erY}.

Let $H_i=\cap_{j\in\mathcal{S}\setminus i} G_j$, by the proof of Theorem 3 in \cite{wei2017effect}, $G'=\prod_{i\in\mathcal{S}} (G_{e^*}\cap H_i)\subseteq G_{e^*}$ is an Abelian group.
We define $Y$ as the random variable characterized by $G'$. $X_{e^*}$ is a deterministic function of $Y$ (as $G'\subseteq G_{e^*}$). 
Thus condition (A) of Theorem~\ref{claim:erY} holds.

By the definition of group characterizable random variables (see \cite{wei2017effect}),
\begin{equation}\label{equa:apdxC2_1}
\begin{aligned}
    H(X_\mathcal{S}|Y)=&H(X_\mathcal{S},Y)-H(Y)\\=&\log \frac{|G|}{|G'\cap G_{\mathcal{S}}|}-\log \frac{|G|}{|G'|}\\=&\log |G'|.  
\end{aligned}
\end{equation}
Similarly for any $i\in\mathcal{S}$,
\begin{equation}\label{equa:apdxC2_2}
\begin{aligned}
    H(X_i|Y)=&H(X_i,Y)-H(Y)\\=&\log \frac{|G|}{|G'\cap G_i|}-\log \frac{|G|}{|G'|}\\=&\log \frac{|G'|}{|G'\cap G_i|}.
\end{aligned}
\end{equation}
By (27), (31) in \cite{wei2017effect} and (\ref{equa:apdxC2_1}), (\ref{equa:apdxC2_2}), we have $ H(X_\mathcal{S}|Y)=\sum_{i\in\mathcal{S}} H(X_i|Y)$ which satisfies condition (B) of Theorem~\ref{claim:erY}.

By our settings of group codes,
$X_\mathcal{S}$ is characterized by the trivial subgroup $\{\mathbf{i}\}$ ($G_\mathcal{S}=\{\mathbf{i}\}$). Note that $\mathcal{Y}$ is the set of cosets of $G'$ in $G$.
The function $f_Y$ takes as input $x_\mathcal{S}\in G$ (a coset of $\mathbf{i}$) and outputs $y$, the coset of $G'$ which contains $x_\mathcal{S}$. 
Such that, $A(y)=y$ and $A_i(y)$ is the set of cosets of $G'\cap G_i$ contained in $A(y)$.
By Lagrange's Theorem we have $|A_i(y)| = \frac{|G'|}{|G'\cap G_i|}$, $|\mathcal{X}_i|=\frac{|G|}{|G_i|}$ and $|\mathcal{X}_{e^*}|=\frac{|G|}{|G_{e^*}|}$. 
By the proof of Theorem 3 in \cite{wei2017effect}, we have 
\begin{equation*}
    \log \frac{|G'|}{|G_i\cap G'|}\geq \log \frac{|G_{e^*}|}{|G_i|}.    
\end{equation*}
Accordingly, $|A_i(y)|\geq |\mathcal{X}_i|/|\mathcal{X}_{e^*}|$. 
Since here $\epsilon=0$, condition (C) of Theorem~\ref{claim:erY} holds.

We have shown that for a network instance $\mathcal{I}$ which is $(\epsilon,\mathbf{R},n)$-feasible by Abelian group network codes, the local edge removal statement holds on every edge $e^*$ in $\mathcal{I}$ which implies the edge removal statement on $\mathcal{I}$.
\end{proof}


\section*{II. Proof of Corollary~\ref{claim:erXe}}

Here we prove that the local edge removal statement holds on edge $e^*$ when the edge random variable on $e^*$ satisfies certain conditions.  The proof of Corollary~\ref{claim:erXe} is limited to the $0$-error case. Roughly speaking, our proof does not naturally extend to the $\epsilon$-error case due to our use of an averaging argument on $A(x_{e^*}')$. To prove Corollary~\ref{claim:erXe} for $\epsilon$-error one could need the averaging argument to preserve error, which is not necessarily the case.



\begin{proof}[Proof of Corollary 3]
Here we replace the random variable $Y$ with $X_{e^*}$, since $X_{e^*}$ is a function of itself, the function $f_Y$ in the definition of $A(Y)$ and $A_i(Y)$ is the global encoding function $\phi_{ge^*}$.
Let $A(x_{e^*})=\{x_\mathcal{S}:x_{e^*}=\phi_{ge^*}(x_\mathcal{S})\}$, $A_i(x_{e^*})=\{x_i\in\mathcal{X}_i:\exists x_\mathcal{S}\in A(x_{e^*}), x_\mathcal{S}(i)=x_i\}$. Here $x_\mathcal{S}(i)$ denotes the $i$th entry of vector $x_\mathcal{S}$.

Set $Y=X_{e^*}$, \emph{Condition (A) and (B)} in Theorem~\ref{claim:erY} hold immediately. 
By an averaging argument, there exists $x_{e^*}'\in \mathcal{X}_{e^*}$ such that $|A(x_{e^*}')|\geq |\mathcal{X}_\mathcal{S}|/|\mathcal{X}_{e^*}|$. 
By definitions and the proof of Theorem~\ref{claim:erY}, we have  $|A(x_{e^*}')|=\prod_{i\in\mathcal{S}}|A_i(x_{e^*}')|$ and $|\mathcal{X}_\mathcal{S}|=\prod_{i\in\mathcal{S}} |\mathcal{X}_i|$, such that $|A_i(x_{e^*}')|\geq |\mathcal{X}_i|/|\mathcal{X}_{e^*}|$ for any $i\in\mathcal{S}$. 
Combined with the fact that $\epsilon=0$, the condition (C) in Theorem~\ref{claim:erY} holds, which suffice to prove our assertion.
\end{proof}


\section*{III. A claim regarding CWL functions}

\begin{claim}
Let $\mathcal{I}$ be a network coding instance. For edge $e\in\mathcal{E}$, if the global encoding function $\phi_{ge}$ on $e$ is CWL, then it is group characterizable (but not necessarily Abelian group characterizable).
\end{claim}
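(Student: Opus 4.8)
The plan is to exhibit an explicit group $G$ together with subgroups $G_f$ for every $f \in \mathcal{S} \cup \{e\}$ such that the joint distribution of the random variables $\{X_i : i \in \mathcal{S}\} \cup \{X_e\}$ agrees, up to the usual group-characterizable recipe, with the cosets of these subgroups. Recall that a collection of random variables is group characterizable if there is a finite group $G$ with subgroups $G_1, \dots, G_k$ so that each variable is (the coset of) a uniformly random element $g \in G$ projected to $G/G_f$ in the appropriate sense; the entropies are then $H(X_\alpha) = \log \frac{|G|}{|\bigcap_{f \in \alpha} G_f|}$. So I need to produce such a $G$ realizing the CWL structure.

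The natural candidate is $G = G_\mathcal{S} = \prod_{i \in \mathcal{S}} G_i$, the direct product of the source groups appearing in Definition~\ref{def:linear}, with $|\mathcal{S}|$ generating $G$ uniformly. First I would set, for each source $j \in \mathcal{S}$, the subgroup $G_j' = \{(x_i)_{i \in \mathcal{S}} \in G : x_j = \mathbf{i}_j\}$, i.e. the ``coordinate hyperplane'' that kills the $j$th coordinate; then $G/G_j' \cong G_j$ and the coset of a uniform $g \in G$ modulo $G_j'$ is exactly a uniform element of $G_j$, matching $X_j$. These $G_j'$ need not be normal in general (the $G_i$ are arbitrary finite groups, possibly non-Abelian), but group characterizability only requires subgroups, not normal subgroups, and the coset space $G/G_j'$ still has the right size $|G_j|$ — this is the point flagged in the remark, and it is why the claim is only about group characterizability, not a quotient-group statement. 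For the edge, I would set $G_e' = \ker \phi_{ge}$ where $\phi_{ge}$ is viewed as a homomorphism $G_\mathcal{S} \to G_e$ (which is exactly what the CWL hypothesis gives us). Since $\phi_{ge}$ is onto $G_e$, we have $|G|/|G_e'| = |G_e| = |\mathrm{support}(\phi_{ge})|$, and the coset of $g$ modulo $\ker \phi_{ge}$ is in bijection with $\phi_{ge}(g)$, so $X_e$ is recovered correctly.

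The remaining work is to check that this assignment reproduces the correct joint entropies / joint distribution, i.e. that for every $\alpha \subseteq \mathcal{S} \cup \{e\}$ the intersection $\bigcap_{f \in \alpha} G_f'$ has the size forced by the actual distribution of $(X_f)_{f \in \alpha}$. For $\alpha \subseteq \mathcal{S}$ this is immediate: $\bigcap_{j \in \alpha} G_j' = \{(x_i) : x_j = \mathbf{i}_j \ \forall j \in \alpha\} \cong \prod_{i \notin \alpha} G_i$, giving $H(X_\alpha) = \sum_{j \in \alpha} \log |G_j|$, consistent with the independent uniform sources. For $\alpha$ containing $e$, I would verify $\bigcap_{j \in \alpha \setminus e} G_j' \cap \ker \phi_{ge}$ has the right size by noting that restricting $\phi_{ge}$ to the subproduct $\prod_{i \notin \alpha \setminus e} G_i$ is again a homomorphism (CWL is preserved under restricting to a subset of coordinates, setting the others to identity), and its kernel is exactly the claimed intersection; a short counting argument ties $|\bigcap_{f \in \alpha} G_f'|$ to $H(X_{\alpha \setminus e}) - H(X_\alpha \mid X_{\alpha \setminus e})$ which matches the true distribution because conditioning on the $\alpha \setminus e$ coordinates, $X_e$ is determined by $\phi_{ge}$ applied to the remaining free coordinates. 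I expect the main (though still modest) obstacle to be precisely this bookkeeping with non-normal subgroups: one must be careful to phrase everything in terms of coset \emph{cardinalities} and the canonical bijection $g \mapsto \phi_{ge}(g)$ rather than invoking quotient groups, and to confirm that the standard definition of group-characterizable random variables in the literature (e.g.\ \cite{chan2008dualities}) indeed only asks for subgroups — which it does. Finally I would remark that the group $G = \prod_i G_i$ is non-Abelian whenever some $G_i$ is, which justifies the parenthetical in the claim statement.
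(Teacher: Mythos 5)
Your construction coincides with the paper's: the ambient group $G'=\prod_{i\in\mathcal{S}}G_i$, the coordinate-killing subgroups $G_i'=\{x\in G': x_i=\mathbf{i}_i\}$ for the sources, $G_e'=\ker\phi_{ge}$ for the edge, and the same immediate verification of $H(X_\alpha)=\log\bigl(|G'|/|\cap_{i\in\alpha}G_i'|\bigr)$ for $\alpha\subseteq\mathcal{S}$. Where you genuinely diverge is the only nontrivial step, the sets $\alpha$ containing $e$. The paper conditions on $X_e=g_e$ and computes $H(X_{\alpha\setminus e}\mid X_e)$ through a combinatorial cell decomposition of $\phi_{ge}^{-1}(g_e)$ (Lemmas~\ref{lemma:phiGe}--\ref{lemma:diff}), whereas you condition the other way: with $\beta=\alpha\setminus\{e\}$, you restrict $\phi_{ge}$ to the subgroup $\{\mathbf{i}_\beta\}\times\prod_{i\notin\beta}G_i$, observe that this restriction is a homomorphism whose kernel is exactly $\cap_{f\in\alpha}G_f'$, and count cosets. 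This is a more elementary and shorter route to the same entropy identity; the paper's argument, in exchange, exhibits the conditional distribution of the sources given $X_e$ explicitly. To make your step airtight, state the one identity it rests on: $\phi_{ge}(x_\beta,x_{\bar\beta})=\phi_{ge}(x_\beta,\mathbf{i}_{\bar\beta})\circ\phi_{ge}(\mathbf{i}_\beta,x_{\bar\beta})$ (this needs no commutativity, since the factorization is coordinate-wise), so that given $X_\beta=x_\beta$ the edge message is a fixed translate of the pushforward of uniform $X_{\bar\beta}$ under the restricted homomorphism, hence uniform on a set of size $|G_\beta'|/|\cap_{f\in\alpha}G_f'|$; this yields $H(X_e\mid X_\beta)=\log\bigl(|G_\beta'|/|\cap_{f\in\alpha}G_f'|\bigr)$ and thus the required joint entropy. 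Two minor corrections: the normality worry is moot, since each $G_i'$ is the product of the remaining direct factors and $G_e'$ is a kernel, so all of these subgroups are in fact normal in $G'$ (characterizability indeed only needs subgroups, as you say, but nothing non-normal arises here); and your bookkeeping phrase ``$H(X_{\alpha\setminus e})-H(X_\alpha\mid X_{\alpha\setminus e})$'' has a slip --- the relation you want is $\log|\cap_{f\in\alpha}G_f'|=\log|G'|-H(X_{\alpha\setminus e})-H(X_e\mid X_{\alpha\setminus e})$. Neither point affects the validity of your approach.
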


\begin{proof}
Recall that for CWL functions the alphabet of each source $i\in\mathcal{S}$ is a group $G_i$ and the edge message alphabet is a group $G_e$. 
Define $G'=G_1\times\dots\times G_{|\mathcal{S}|}$ and $G_i'=G_1\times\dots \{\mathbf{i}_i\}\dots \times G_{|\mathcal{S}|}$ for any $i\in\mathcal{S}$. 
Also, for $\alpha\subset \mathcal{S}$, let $G_\alpha'=\cap_{i\in\alpha} G_i'=\prod_{i\in \mathcal{S}\setminus\alpha} G_i \times \prod_{i\in \alpha}\{\mathbf{i}_i\}$.
By \cite{gallian2016contemporary}, we know that $\{G_\alpha':\alpha\subset \mathcal{S}\}$ are subgroups of $G'$.
In what follows we use ``$\cdot$'', ``$\circ$'' to denote the binary group operations on $G'$, $G_e$, respectively.


Here, and in what follows all products of the form $\prod_{i\in\alpha} g_i$ for $\alpha\subseteq \mathcal{S}$ are ordered by the natural ordering of $\mathcal{S}=\{1,\dots,|\mathcal{S}|\}$.

With the definition of group homomorphism $\phi_{ge}$ in Section~\ref{sec:CWL}, we define $G_e'=\phi_{ge}^{-1}(\mathbf{i}_e)$. 
We first show that $G_e'$ is a subgroup of $G'$.
For any $a,b\in G_e'$, $\phi_{ge}(a\cdot b)=\phi_{ge}(a)\circ \phi_{ge}(b)=\mathbf{i}_e$. 
Thus, $G_e'$ is closed. 
Let $\mathbf{i}'=(\mathbf{i}_i)_{i\in\mathcal{S}}\in G'$. For any $a\in G_e'$ we have $\phi_{ge}(a)=\phi_{ge}(\mathbf{i}'\cdot a) = \phi_{ge}(a\cdot\mathbf{i}')=\mathbf{i}_e\circ \phi_{ge}(\mathbf{i}')=\mathbf{i}_e$, therefore $\mathbf{i}'\in G_e'$ and $G_e'$ has an identity element.
Let $b\in G'$ and $b^{-1}$ be the inverse of $b$ in $G'$. By our definitions, $\phi_{ge}(\mathbf{i}')=\phi_{ge}(b\cdot b^{-1})=\phi_{ge}(b)\circ \phi_{ge}(b^{-1})=\mathbf{i}_e\circ\phi_{ge}(b^{-1})=\mathbf{i}_e$ which implies $\phi_{ge}(b^{-1})=\mathbf{i}_e$, $b^{-1}\in G_e'$, and $G_e'$ has inverse elements.
Thus $G_e'$ is a subgroup of $G'$.

\begin{lemma}\label{lemma:phiGe}
Let $\phi_{ge}$ be a group homomorphism from $G$ to $G_e$, $|\phi_{ge}^{-1}(g_e)|=|\phi_{ge}^{-1}(g_e')|$ for any $g_e\neq g_e'\in G_e$.
And $\{\phi_{ge}^{-1}(g_e):g_e\in G_e\}$ are cosets of $G_e'$.
\end{lemma}

\begin{proof}
For any $g_e\neq g_e'\in G_e$, let $g\in\phi_{ge}^{-1}(g_e)$ and $g'\in\phi_{ge}^{-1}(g_e')$.
Let $\delta\in G$ satisfy $g\cdot\delta= g'$, then $\phi_{ge}(g\cdot\delta) = \phi_{ge}(g)\circ\phi_{ge}(\delta) = g_e'$, which implies $\phi_{ge}(\delta) = g_e^{-1}\circ g_e'$. 
Now, for any $g''\in \phi_{ge}^{-1}(g_e)$, as $\phi_{ge}(g''\cdot \delta)= \phi_{ge}(g'')\circ \phi_{ge}(\delta) = g_e\circ g_e^{-1}\circ g_e' =g_e'$, we have $g''\cdot \delta\in \phi_{ge}^{-1}(g_e')$, such that $\{g''\cdot \delta:g''\in \phi_{ge}^{-1}(g_e)\}\subseteq \phi_{ge}^{-1}(g_e')$. Since $|\{g''\cdot \delta:g''\in \phi_{ge}^{-1}(g_e)\}|=|\phi_{ge}^{-1}(g_e)|$, we have $|\phi_{ge}^{-1}(g_e)|\leq|\phi_{ge}^{-1}(g_e')|$. 
Similarly, $|\phi_{ge}^{-1}(g_e')|\geq|\phi_{ge}^{-1}(g_e)|$, thus $|\phi_{ge}^{-1}(g_e')|=|\phi_{ge}^{-1}(g_e)|$. Since we have $|G_e|$ preimages sets, we have $|\phi_{ge}^{-1}(g_e)|=\frac{|G'|}{|G_e|}$ for any $g_e\in G_e$.
 
Let $g\in \phi^{-1}_{ge}(g_e)$ , let $g G_e'=\{g\cdot g_e':g_e'\in G_e'\}$.
For any $g'\in g G_e'$, $\phi_{ge}(g')=\phi_{ge}(g)\circ \mathbf{i}_e=g_e$.
Thus $g G_e'\subseteq \phi_{ge}^{-1}(g_e)$. 
By Lagrange's Theorem, we have $|g G_e'|=\frac{|G'|}{|G_e|}=|\phi_{ge}^{-1}(g_e)|$, such that $g G_e'= \phi_{ge}^{-1}(g_e)$. In other words, for any $g_e\in G_e$, $\phi_{ge}^{-1}(g_e)$ is a coset of $G_e'$ with
\begin{equation}\label{equa:cleq1}
    |\phi_{ge}^{-1}(g_e)|=|G_e'|.
\end{equation}
\end{proof}

To show that $\{G_\alpha':\alpha\subset \mathcal{S}\cup \{e\}\}$ is a group characterization of $\phi_{ge}$, we need to show that $H(X_\alpha)=\log\frac{|G'|}{|G_\alpha'|}$ for any $\alpha\subseteq\mathcal{S}\cup \{e\}$.
Here $G_\alpha'=\cap_{f\in\alpha} G_f'$.
 
For any $\alpha\subseteq \mathcal{S}$, recall that $G_\alpha=\prod_{f\in\alpha} G_f$ and in what follows we denote $\bar{\alpha}=\mathcal{S}\setminus\alpha$, $\mathbf{i}_\alpha=(\mathbf{i}_i)_{i\in\alpha}$.
By our definitions $G_{\bar{\alpha}}=\prod_{i\in\mathcal{S}\setminus\alpha} G_i$ and $G'=G_\alpha\times G_{\bar{\alpha}}$.

By our definitions, we have $|G_\alpha||G_\alpha'|=|G'|$ implying that 
\begin{equation*}
    H(X_\alpha)=\log |G_\alpha|=\log\frac{|G'|}{|G_\alpha'|}.
\end{equation*}
In addition $|G_e'|=\frac{|G'|}{|G_e|}$ implying that
\begin{equation}\label{equa:cleq01}
    H(X_e)=\log |G_e|=\log\frac{|G'|}{|G_e'|}.
\end{equation}
 
We now show that $X_{\alpha,e}$ is group characterizable, where $\alpha\subseteq \mathcal{S}$.
We first prove that
\begin{equation}\label{equa:cleq02}
\begin{aligned}
    H(X_\alpha|X_e)&=\sum_{g_e\in G_e} \Pr(X_e=g_e)H(X_\alpha|X_e=g_e)\\ &\stackrel{(a)}{=}H(X_\alpha|X_e=\mathbf{i}_e)\\ &\stackrel{(b)}{=}\log \frac{|\phi_{ge}^{-1}(\mathbf{i}_e)|} {|\phi_{ge}^{-1}(\mathbf{i}_e) \cap \{\mathbf{i}_{\alpha}\times G_{\bar{\alpha}}\}|}\\ &\stackrel{(c)}{=}\log \frac{|G_e'|}{|G_e'\cap G_\alpha'|}.
\end{aligned}
\end{equation}

By the uniformity of source random variables and by (\ref{equa:cleq1}), we have
\begin{equation}\label{equa:cleq2}
\begin{aligned}
       \Pr(X_e=g_e) =& \sum_{g_{\mathcal{S}}\in \phi_{ge}^{-1}(g_e)} \Pr(X_\mathcal{S}=g_{\mathcal{S}}) \\ =& \frac{|G_e'|}{|G'|} =\frac{1}{|G_e|}
\end{aligned}
\end{equation}
for any $g_e\in G_e$.

Let $\mathcal{Y}_i=\{g_e\in G_e: g_e=\phi_{ge}(g_i,\mathbf{i}_{\mathcal{S}\setminus i}), g_i\in G_i\}$.
For any $g_e\in\mathcal{Y}_i$, we define $A_i(g_e)=\{g_i\in G_i:\phi_{ge}(g_i,\mathbf{i}_{\mathcal{S}\setminus i})=g_e\}$ which is a subset of $G_i$. 
We partition $G_i$ into subsets $\{A_i (g_e): g_e\in \mathcal{Y}_i\}$. 
By Lemma~\ref{lemma:sameSize}, for any $g_{e,i}'\neq g_{e,i}''\in \mathcal{Y}_i$, we have $|A_i(g_{e,i}')|=|A_i(g_{e,i}'')|$.
For any $\alpha\subseteq \mathcal{S}$, let 
\begin{equation*}
    \mathcal{Y}_\alpha= \left\{ \prod_{i\in\alpha} g_{e,i}: g_{e,i}\in \mathcal{Y}_i, i\in\alpha \right\}.
\end{equation*}
Note that $\mathcal{Y}_\alpha\subseteq G_e$.
For any $g_{e,\alpha} \in\mathcal{Y}_\alpha$, we define
\begin{equation*}
    A_\alpha(g_{e,\alpha})=\left\{\prod_{i\in\alpha} A_i(g_{e,i}):\prod_{i\in\alpha} g_{e,i}=g_{e,\alpha}\right\}.
\end{equation*}
As $A_i(g_{e,i})\subset G_i$ for each $i\in\alpha$, $A_\alpha(g_{e,\alpha})$ is a subset of $G_\alpha=\prod_{i\in\alpha} G_i$.

\begin{lemma}\label{lemma:inverseInY_a}
For any $\alpha\subseteq \mathcal{S}$ and $g_{e,\alpha}\in \mathcal{Y}_\alpha$, we have $g_{e,\alpha}^{-1}\in \mathcal{Y}_\alpha$.
\end{lemma}
\begin{proof}
As $\phi_{ge}(\mathbf{i}_\mathcal{S})=\phi_{ge}(\mathbf{i}_\mathcal{S}\cdot \mathbf{i}_\mathcal{S})=\phi_{ge}(\mathbf{i}_\mathcal{S})\circ \phi_{ge}(\mathbf{i}_\mathcal{S})$, we have 
\begin{equation} \label{equa:identity}
    \phi_{ge}(\mathbf{i}_\mathcal{S})=\mathbf{i}_e.
\end{equation}
Let $g_{e,\alpha}\in\mathcal{Y}_\alpha$ and $g_\alpha\in G_\alpha$, such that $\phi_{ge}(g_\alpha,\mathbf{i}_{\bar{\alpha}})=g_{e,\alpha}$.
By our definition $g_\alpha=(g_i)_{i\in\alpha}$, define $g_\alpha^{-1}=(g_i^{-1})_{i\in\alpha}$ where $g_i\in G_i$.
We have 
\begin{equation*}
\begin{aligned}
   \phi_{ge}(\mathbf{i}_\mathcal{S}) =&\phi_{ge}(g_\alpha\cdot g_\alpha^{-1},\mathbf{i}_{\bar{\alpha}})\\ =& \phi_{ge}(g_\alpha,\mathbf{i}_{\bar{\alpha}}) \circ \phi_{ge}(g_\alpha^{-1},\mathbf{i}_{\bar{\alpha}})\\ =& g_{e,\alpha}\circ \phi_{ge}(g_\alpha^{-1},\mathbf{i}_{\bar{\alpha}}).
\end{aligned}
\end{equation*}
Thus $\phi_{ge}(g_\alpha^{-1},\mathbf{i}_{\bar{\alpha}}) = \prod_{i\in\alpha}\phi_{ge}(g_i^{-1},\mathbf{i}_{\mathcal{S}\setminus i})=g_{e,\alpha}^{-1}$ implying $g_{e,\alpha}^{-1}\in \mathcal{Y}_\alpha$.
\end{proof}

Let $g_{e,\alpha}'\neq g_{e,\alpha}''\in\mathcal{Y}_\alpha$.
By Lemma~\ref{lemma:inverseInY_a} we have $g_{e,\alpha}'^{-1}\in\mathcal{Y}_\alpha$.
Let $g_\alpha', g_\alpha''\in G_\alpha$, such that $\phi_{ge}(g_\alpha',\mathbf{i}_{\bar{\alpha}})=g_{e,\alpha}'^{-1}$ and $\phi_{ge}(g_\alpha'',\mathbf{i}_{\bar{\alpha}})=g_{e,\alpha}''$.
Define $\delta=g_\alpha'\cdot g_\alpha''$, such that 
\begin{equation*}
   \phi_{ge}(\delta,\mathbf{i}_{\bar{\alpha}})= \phi_{ge}(g_\alpha',\mathbf{i}_{\bar{\alpha}})\circ \phi_{ge}(g_\alpha'',\mathbf{i}_{\bar{\alpha}})=g_{e,\alpha}'^{-1}\circ g_{e,\alpha}''.
\end{equation*}

Let $g_\alpha\in A_\alpha(g_{e,\alpha}')$. 
As $\phi_{ge}(g_\alpha\cdot\delta ,\mathbf{i}_{\bar{\alpha}}) = \phi_{ge}(g_\alpha,\mathbf{i}_{\bar{\alpha}}) \circ \phi_{ge}(\delta,\mathbf{i}_{\bar{\alpha}}) = g_{e,\alpha}'\circ g_{e,\alpha}'^{-1} \circ g_{e,\alpha}''= g_{e,\alpha}''$, we have $\{g_\alpha \cdot \delta:g_\alpha\in G_\alpha\}\subseteq A_\alpha(g_{e,\alpha}'')$. 
As $|A_\alpha(g_{e,\alpha}')|=|\{g_\alpha\cdot \delta:g_\alpha\in A_\alpha(g_{e,\alpha}')\}|$, we have $|A_\alpha(g_{e,\alpha}')|\leq |A_\alpha(g_{e,\alpha}'')|$.
Similarly, we can prove $|A_\alpha(g_{e,\alpha}')|\geq |A_\alpha(g_{e,\alpha}'')|$.
Thus, for $\alpha\subseteq \mathcal{S}$,
\begin{equation}\label{equa:samesizei}
    |A_\alpha(g_{e,\alpha}')| = |A_\alpha(g_{e,\alpha}'')|
\end{equation}
for any $g_{e,\alpha}'\neq g_{e,\alpha}''\in \mathcal{Y}_\alpha$.

By the definitions above, we can partition $G'$ into cells. 
Namely, for any $\alpha\subseteq \mathcal{S}$ and $\bar{\alpha}=\mathcal{S}\setminus\alpha$, each cell is a subset $A_\alpha(g_{e,\alpha})\times A_{\bar{\alpha}}(g_{e,\bar{\alpha}})$, where $g_{e,\alpha}\in \mathcal{Y}_\alpha$ and $g_{e,\bar{\alpha}}\in \mathcal{Y}_{\bar{\alpha}}$.

\begin{lemma}\label{lemma:diff}
Let $g_e\in G_e$.
Let $A_\alpha(g_{e,\alpha}')\times A_{\bar{\alpha}}(g_{e,\bar{\alpha}}')$ and $A_\alpha(g_{e,\alpha}'')\times A_{\bar{\alpha}}(g_{e,\bar{\alpha}}'')$ be two disjoint cells in $\phi_{ge}^{-1}(g_e)$. Let $g_{e,\alpha}',g_{e,\alpha}''\in \mathcal{Y}_\alpha$, $g_{e,\bar{\alpha}}',g_{e,\bar{\alpha}}''\in\mathcal{Y}_{\bar{\alpha}}$, such that $(g_{e,\alpha}', g_{e,\bar{\alpha}}')\neq (g_{e,\alpha}'', g_{e,\bar{\alpha}}'')$ and $g_{e,\alpha}'\circ g_{e,\bar{\alpha}}'=g_{e,\alpha}''\circ g_{e,\bar{\alpha}}''=g_e$.
We have that $g_{e,\alpha}'\neq g_{e,\alpha}''$ and $g_{e,\bar{\alpha}}'\neq g_{e,\bar{\alpha}}''$.
\end{lemma}

\begin{proof}

\begin{figure}
    \centering
    \includegraphics[draft=false,width=0.5\textwidth]{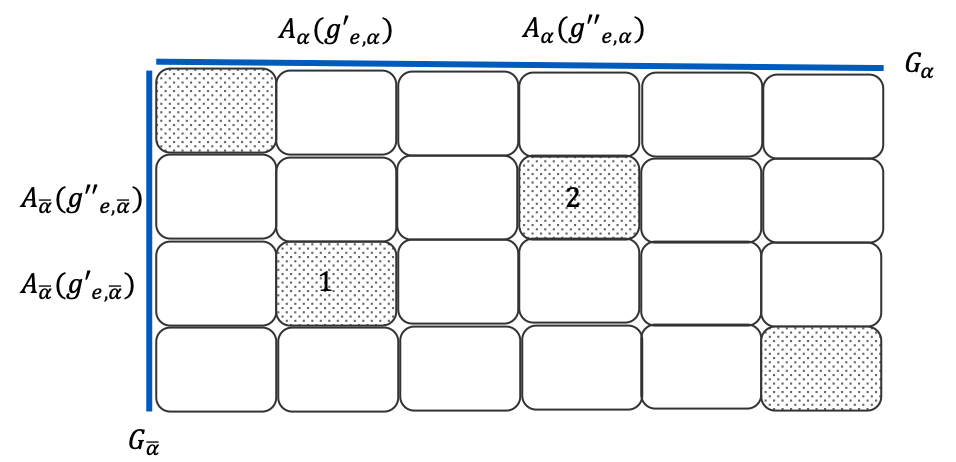}
    \caption{An example depicting the proof of Lemma~\ref{lemma:diff}. The dark blocks represent the set $\phi_{ge}^{-1}(g_e)$. Block 1 represents $A_\alpha(g_{e,\alpha}')\times A_{\bar{\alpha}}(g_{e,\bar{\alpha}}')$ and Block 2 represents $A_\alpha(g_{e,\alpha}'')\times A_{\bar{\alpha}}(g_{e,\bar{\alpha}}'')$.}
    \label{fig:lemma}
\end{figure}

The proof is depicted in Figure~\ref{fig:lemma}. 

Assume $g_{e,\alpha}'=g_{e,\alpha}''$, as $g_{e,\alpha}'\circ g_{e,\bar{\alpha}}'=g_{e,\alpha}''\circ g_{e,\bar{\alpha}}''=g_e$, we have $g_{e,\bar{\alpha}}'=g_{e,\bar{\alpha}}''$. 
This contradicts the assumption that $(g_{e,\alpha}', g_{e,\bar{\alpha}}')\neq (g_{e,\alpha}'', g_{e,\bar{\alpha}}'')$.
\end{proof}

We now show that for any cell $A_\alpha(g_{e,\alpha})\times A_{\bar{\alpha}}(g_{e,\bar{\alpha}})$, 
\begin{equation} \label{equa:projalpha}
\begin{aligned}
    |A_{\bar{\alpha}}(g_{e,\bar{\alpha}})| &\stackrel{(a)}{=} |A_{\bar{\alpha}}((\mathbf{i}_e))|\\  &\stackrel{(b)}{=}|\phi_{ge}^{-1}(\mathbf{i}_e) \cap \{\mathbf{i}_{\alpha}\times G_{\bar{\alpha}}\}|.
\end{aligned}
\end{equation}

By (\ref{equa:identity}) we have $\phi_{ge}(\mathbf{i}_i,\mathbf{i}_{\mathcal{S}\setminus i})=\mathbf{i}_e$.
As $\mathbf{i}_i\in G_i$, by the definition of $\mathcal{Y}_i$, we know that $\mathbf{i}_e\in \mathcal{Y}_i$. 
With the definition of $\mathcal{Y}_\alpha$, we have $\prod_{i\in\alpha}\mathbf{i}_e=\mathbf{i}_e\in \mathcal{Y}_\alpha$.
By (\ref{equa:samesizei}), equation (a) of (\ref{equa:projalpha}) holds.
With our definitions, for any $g_e\in G_e$, we have
\begin{equation}\label{equa:phi_ge'}
    \phi_{ge}^{-1}(g_e)= \{A_\alpha(g_{e,\alpha})\times A_{\bar{\alpha}}(g_{e,\bar{\alpha}}): g_{e,\alpha}\circ g_{e,\bar{\alpha}} =g_e\}.
\end{equation}

As $\mathbf{i}_e\in \mathcal{Y}_\alpha$ for any $\alpha\subseteq \mathcal{S}$ and as $\mathbf{i}_e\circ\mathbf{i}_e=\mathbf{i}_e$, by (\ref{equa:phi_ge'}) we have $A_\alpha(\mathbf{i}_e)\times A_{\bar{\alpha}}(\mathbf{i}_e)\subset \phi_{ge}^{-1}(\mathbf{i}_e)$. 
By Lemma~\ref{lemma:diff}, $A_\alpha(\mathbf{i}_e)\times A_{\bar{\alpha}}(\mathbf{i}_e)$ is the only cell in $\phi_{ge}^{-1}(\mathbf{i}_e)$ with $g_{e,\alpha}=\mathbf{i}_e\in\mathcal{Y}_\alpha$ or  $g_{e,\bar{\alpha}}=\mathbf{i}_e\in\mathcal{Y}_{\bar{\alpha}}$. 
The size of $A_{\bar{\alpha}}(\mathbf{i}_e)$ is exactly the projection of the cell $A_\alpha(\mathbf{i}_e)\times A_{\bar{\alpha}}(\mathbf{i}_e)$ onto $G_{\bar{\alpha}}$.
Which by Lemma~\ref{lemma:diff} is exactly the intersection of $(\mathbf{i}_\alpha\times G_{\bar{\alpha}})$ and $\phi_{ge}^{-1}(\mathbf{i}_e)$.
Namely, $|A_{\bar{\alpha}}(\mathbf{i}_e)| = |\phi_{ge}^{-1}(\mathbf{i}_e) \cap \{ \mathbf{i}_{\alpha} \times A_{\bar{\alpha}}(\mathbf{i}_e)\}| = |\phi_{ge}^{-1}(\mathbf{i}_e) \cap \{\mathbf{i}_{\alpha} \times G_{\bar{\alpha}}\}|$, and the equation (b) of (\ref{equa:projalpha}) holds.
 
The projection of $\phi_{ge}^{-1}(g_e)$ on $G_\alpha$ is defined as $G_\alpha(g_e)=\{g_\alpha: \exists (g_\alpha, g_{\bar{\alpha}})\in \phi_{ge}^{-1}(g_e), g_{\bar{\alpha}}\in G_{\bar{\alpha}}\}$.
By (\ref{equa:cleq1}), (\ref{equa:projalpha}) and (\ref{equa:phi_ge'}),
\begin{equation*}
    |G_\alpha(g_e)| = \frac{|\phi_{ge}^{-1}(g_e)|}{|A_{\bar{\alpha}}(\mathbf{i}_e)|} = \frac{|\phi_{ge}^{-1}(\mathbf{i}_e)|} {|\phi_{ge}^{-1}(\mathbf{i}_e) \cap \{\mathbf{i}_{\alpha} \times G_{\bar{\alpha}}\}|}.
\end{equation*}
Conditioning on $X_e=g_e$, as $X_\alpha$ is uniformly distributed over $G_\alpha(g_e)$, for any $g_e\in G_e$ we have
\begin{equation*} 
    \Pr(X_\alpha=g_\alpha|X_e=g_e) = \frac{1}{|G_\alpha(g_e)|} = \frac{|\phi_{ge}^{-1}(\mathbf{i}_e) \cap \{ \mathbf{i}_{\alpha}\times G_{\bar{\alpha}}\}|}{|\phi_{ge}^{-1}(\mathbf{i}_e)|}
\end{equation*}
for $g_\alpha\in G_\alpha(g_e)$ and $\Pr(X_\alpha=g_\alpha|X_e=g_e)=0$ for $g_\alpha\in G_\alpha \setminus G_\alpha(g_e)$.
Namely, 
\begin{equation}\label{equa:cleq3}
    H(X_\alpha|X_e=g_e)= \log \frac{|\phi_{ge}^{-1}(\mathbf{i}_e)|}{|\phi_{ge}^{-1}(\mathbf{i}_e) \cap \{\mathbf{i}_{\alpha} \times G_{\bar{\alpha}}\}|}   
\end{equation}
for any $g_e\in G_e$.
Accordingly, with (\ref{equa:cleq2}) and (\ref{equa:cleq3}), equation (a) of (\ref{equa:cleq02}) holds and with (\ref{equa:cleq3}) equation (b) of (\ref{equa:cleq02}) holds. By our definition of $G_e'$ and $G_\alpha'$, equation (c) of (\ref{equa:cleq02}) holds. Thus, by (\ref{equa:cleq01}) and (\ref{equa:cleq02})
\begin{equation*}
    H(X_e)+H(X_\alpha|X_e)=\log\frac{|G'|}{|G_e'\cap G_\alpha'|}=H(X_e,X_\alpha).
\end{equation*}
 
We conclude that the group $G'$ with subgroups $\{G_f':f\in\mathcal{S}\cup\{e\}\}$ is a group characterization of random variables $\{X_f:f\in\mathcal{S}\cup e\}$. Accordingly $\phi_{ge}$ is group characterizable \cite{chan2008dualities}. Note that, since we do not assume groups $\{G_f:f\in\mathcal{S}\cup\{e\}\}$ are Abelian, $G'$ is not necessarily Ablian.
\end{proof}


\section*{IV. Proof of Theorem~\ref{claim:ER_piecelinear}}

\begin{proof}
By assumption, $\phi_{ge^*}$ is $K$-piece-wise CWL with $K$ subfunctions $\{\phi_{ge^*}^{(k)}:G_\mathcal{S}\mapsto G_{e^*}\}_{k\in[K]}$.
By an averaging argument, there exists some $k'\in [K]$ corresponding to the subfunction $\phi_{ge^*}^{(k')}$ for which $|S^{(k')}|\geq |G_\mathcal{S}|/K$. 
For every $i\in\mathcal{S}$
\begin{equation}\label{equa:SGM}
    |S_i^{(k')}|\geq \frac{|G_i|}{K},
\end{equation}
as otherwise $\prod_{i\in\mathcal{S}} |S_i^{(k')}| < {\prod_{i\in\mathcal{S}} |G_i|}/{K}={|G_\mathcal{S}|}/{K}$.

For any $i\in\mathcal{S}$, let $G_{e^*}(i) = \{\phi^{(k')}_{ge^*}(x_i,\mathbf{i}_{\mathcal{S}\setminus i}):x_i\in G_i\}$.
For any $x_{e^*}\in G_{e^*}(i)$, we define $A_i(x_{e^*})=\{x_i\in G_i:\phi^{(k')}_{ge^*}(x_i, \mathbf{i}_{\mathcal{S}\setminus i})=x_{e^*}\}$.
The set $A_i(x_{e^*})$ is defined similarly to $A_i(y)$ in the proof of Theorem~\ref{claim:ER_linear}. By Lemma~\ref{lemma:sameSize} for any $x_{e^*},\bar{x}_{e^*}\in G_{e^*}(i)$, we have $|A_i(x_{e^*})|=|A_i(\bar{x}_{e^*})|$.
For any $i\in\mathcal{S}$, we call the elements in $S^{(k')}_i$ ``good'' and elements in $G_i\setminus S^{(k')}_i$ ``bad''.
By an averaging argument there exists $x_{e^*}(i)\in G_{e^*}(i)$, such that $A_i(x_{e^*}(i))$ contains at least a fraction of $\frac{|S^{(k')}_i|}{|G_i|}$ ``good'' elements. 
We define $G_i'=\{x_i\in A_i(x_{e^*}(i)): \mbox{$x_i$ is good}\}$, accordingly 
\begin{equation} \label{equa:AGood}
    |G_i'|\geq \frac{|S^{(k')}_i|}{|G_i|}|A_i(x_{e^*}(i))|.
\end{equation}
We use $\prod_{i\in\mathcal{S}}G_i'$ to construct a new code $\{X_f':f\in\mathcal{S}\cup\mathcal{E}\}$. Here, $X_i'$ is uniformly distributed over $G_i'$, such that $H(X_i')=n R_i'= \log |G_i'|$.
For the new code, sources are independent since there is no cooperation during transmission.
As the supports of new source random variables are subsets of the original ones ($G_i'\subset G_i$ for any $i\in\mathcal{S}$), we apply the original encoding and decoding functions to the new code.
As $\phi_{ge^*}^{(k')}$ is CWL, the message $x_{e^*} = \phi_{ge^*}^{(k')}(g)$ for any $g \in \prod_{i \in \mathcal{S}}G_i'$ transmitted through $e^*$ equals $\stackrel[i\in\mathcal{S}]{e^*}{\circ} x_{e^*}{(i)}$ which is constant. 
Thus no information needs to be transmitted through edge $e^*$ and we may consider it removed.
As $|G_{e^*}(i)|\leq |G_{e^*}|$, we know $|A_i(x_{e^*}(i))|=|G_i|/|G_{e^*}(i)|\geq |G_i|/|G_{e^*}|$, with (\ref{equa:SGM}) and (\ref{equa:AGood}), we have 
\begin{equation}\label{equa:lastone}
    |G_i'|\geq \frac{|S^{(k')}_i|}{|G_i|}|A_i(x_{e^*}(i))|\geq \frac{|A_i(x_{e^*}(i))|}{K}= \frac{|G_i|}{|G_{e^*}| \cdot K}.
\end{equation}
Since $H(X_i')=n R_i'=\log |G_i'|$, $H(X_i)=n R_i= \log |G_i|$ and $H(X_{e^*})\leq n R_{e^*}=\log |G_{e^*}|$, by (\ref{equa:lastone}) we have
\begin{equation*}
    R_i'\geq R_i-R_{e^*}-\frac{\log K}{n}.
\end{equation*}
\end{proof}

\textbf{Remarks:} As we can see, the proof of Theorem~\ref{claim:ER_piecelinear} is limited to the $0$-error case. Roughly speaking, our proof does not naturally extend to the $\epsilon$-error case due to our use of an averaging argument on $\{1\dots k\}$. To prove Theorem~\ref{claim:ER_piecelinear} for $\epsilon$-error one could need the averaging argument to preserve error, which is not necessarily the case.


\section*{V. Case studies of Statement~\ref{statement:task}}

\subsection{The network instance of \cite{dougherty2005insufficiency}}

The network instance of \cite{dougherty2005insufficiency}, referred to as $N_3$, includes three subnetworks, referred here as networks $A$, $B$ and $C$ (from left to right). In \cite{dougherty2005insufficiency} it is shown that $N_3$ is solvable but not linearly solvable. In Fig.~\ref{fig:network_insuff} the original zero-error blocklength $n=1$ non-linear solution for $N_3$ of  \cite{dougherty2005insufficiency} is presented. For all edges $e^*$ in networks $A$ and $B$, the original code is linear. We thus focus on the validity of the CWL statement when $e^*$ is taken from network $C$. Toward this end, we suggest a modified code for which all edges in network $C$ are linear (at the ``price'' of modifying the encoding on edges of network $B$ to be non-linear).
Our modified code preserves the original encoding presented in Fig.~\ref{fig:network_insuff} on all edges of network $A$. Using the notation of \cite{dougherty2005insufficiency} for network $B$ we set the encoding functions to be:
\begin{equation*}
    \begin{aligned}
        e_{15,19}&=a+b+t(c)\\ e_{8,23}&=a+b\\ e_{24,32}&=a+t(c)\\ e_{25,33}&=b+t(c)\\
    \end{aligned}
\end{equation*}
For network $C$ we set the linear encoding functions to be:
\begin{equation*}
    \begin{aligned}
        e_{16,20}&=c+d+e\\ e_{26,34}&=c+d\\ e_{27,35}&=c+e\\ e_{28,36}&=d+e\\
    \end{aligned}
\end{equation*}
It now follows that the demands are met:
\begin{equation*}
    \begin{aligned}
        n_{40}: c=&t(e_{19,40}-e_{31,40})=t(a+b+t(c)-(a+b))\\ n_{41}: b=&e_{19,41}-e_{32,41}=a+b+t(c)-(a+t(c))\\ n_{42}: a=&e_{19,42}-e_{33,42}=a+b+t(c)-(b+t(c))\\ n_{43}: c=&(e_{33,43}+e_{32,43}-e_{31,43})\\ &+t(e_{34,43}+e_{35,43}-e_{36,43})\\ =&t(a+t(c)+b+t(c)-(a+b))\\ &+t(c+d+c+e-(d+e))\\ =&t(2c)+2t(c)\\ n_{44}: e=&e_{20,44}-e_{34,44}=(c+d+e)-(c+d)\\ n_{45}: d=&e_{20,45}-e_{35,45}=(c+d+e)-(c+e)\\n_{46}: c=&e_{20,46}-e_{36,46}=(c+d+e)-(d+e)\\
    \end{aligned}
\end{equation*}

\begin{figure}    \label{fig:network_insuff}
    \centering
    \includegraphics[draft=false,width=0.5\textwidth]{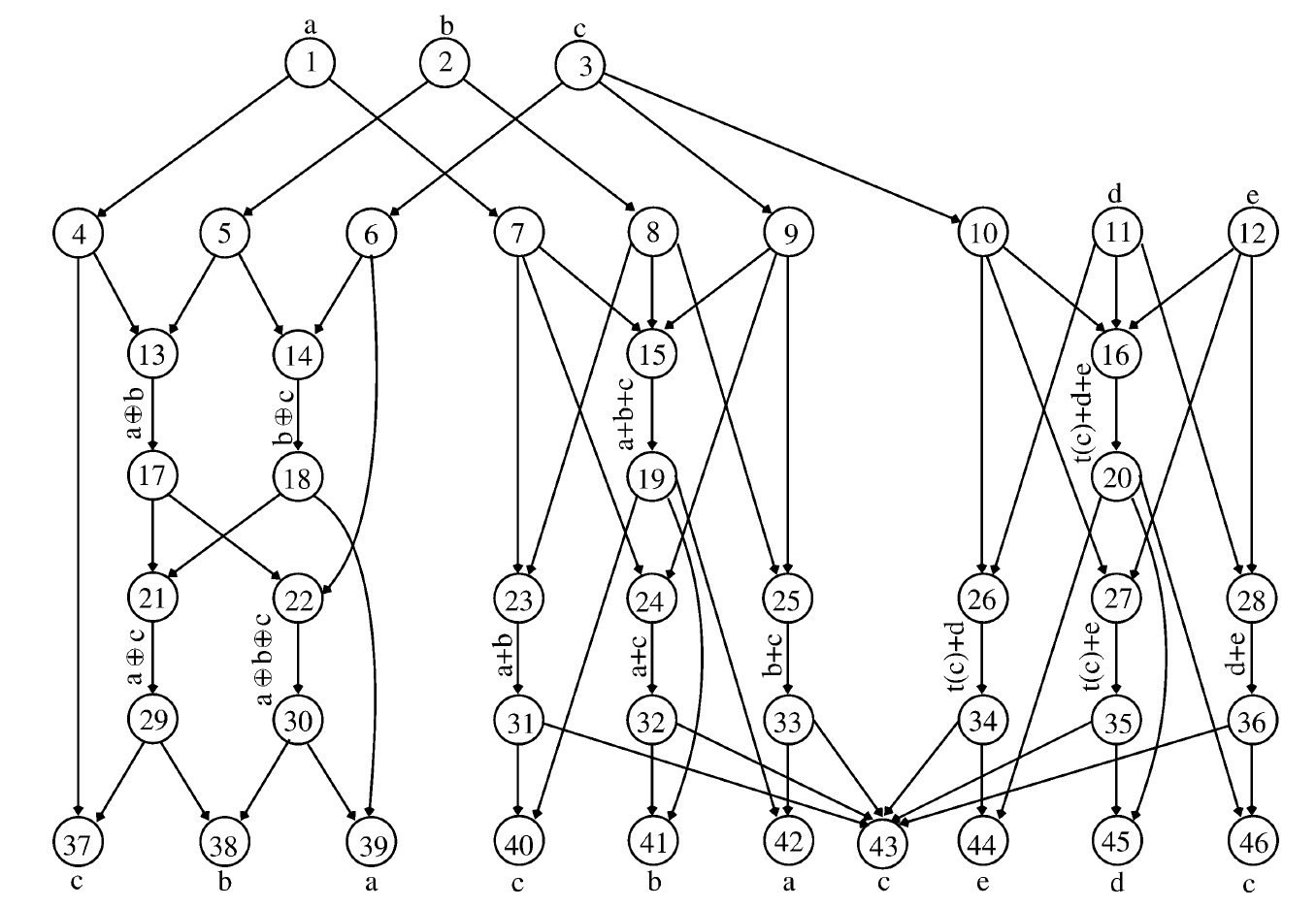}
    \caption{The network $N_3$ and coding scheme from \cite{dougherty2005insufficiency}. Figure taken from \cite{dougherty2005insufficiency}.}

\end{figure}


\subsection{The network instance of \cite{chan2008dualities}}

In \cite{chan2008dualities}, a duality between  entropy functions and network codes is introduced. 
For the network instance $N$ given in Figure~\ref{fig:network_duality} and for a given rate vector $\mathbf{R}$ (induced by an entropy vector which violates the Ingleton inequality) that is not linearly feasible, \cite{chan2008dualities} present a $(0,\mathbf{R},n)$-feasible non-linear code.

\begin{figure}\label{fig:network_duality}
    \centering
    \includegraphics[draft=false,width=0.45\textwidth]{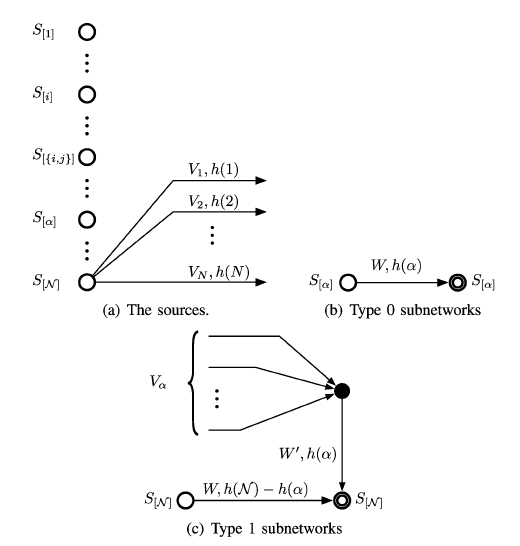}
    \includegraphics[draft=false,width=0.45\textwidth]{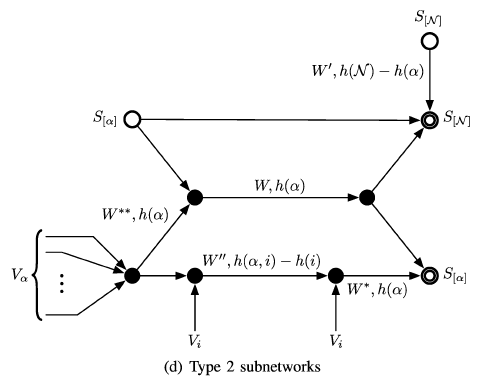}
    \caption{The network $N$ from \cite{chan2008dualities} with corresponding notation used in our analysis. The Figure is taken from \cite{chan2008dualities}.}
    
\end{figure}

Using the notation from  \cite{chan2008dualities}, let $\mathcal{N}=\{1,2,3,4\}$ and consider an entropy function $h$ of four quasi-uniform random variables $U_1, U_2, U_3, U_4$, such that
\begin{equation*}
    \begin{aligned}
    h(1)&= h(2)=h(3)=h(4)=\log 13\\ h(1,2)&= \log 6+\log 13\\ h(3,4)&= \log 13+\log 12\\ h(1,3)&= h(1,4) = h(2,3) = h(2,4) = \log 13+\log 4\\ h(i,j,k)&=\log 13 +\log 12\\ &=h(1,2,3,4),\ \forall \mbox{distinct}\ i,j,k.\\
    \end{aligned}
\end{equation*}
It is easy to show that $h$ violates the Ingleton inequality. \cite{chan2008dualities} show that the rate vector $\mathbf{R}=(R_\alpha)_{\alpha\subseteq \mathcal{S}}$ induced from $h$ is achievable on $N$ (with a non-linear code), where $R_\alpha=h(\alpha)$ is the rate of source $S[\alpha]$ for $\alpha\subseteq \mathcal{N}$. For every $e^*$ in $N$, we modify the code of \cite{chan2008dualities}, such that (a) it remains feasible, and (b) the encoding function on $e^*$ is CWL. We address edges of each subnetwork specified in Figure~\ref{fig:network_duality} separately. Before we start, we prove the following lemma used in our analysis.

\begin{lemma}\label{lemma:linear}
Let $A$, $B$ be two sets of size $p,q$, respectively for which $q$ divides $p$. Let $g$ be a balanced function that maps $A$ to $B$, i.e., the pre-image of every $b$ under $g$ is of size $p/q$. Then there exists a CWL function $f$ that maps $A$ onto $B$ that equals $g$.
\end{lemma}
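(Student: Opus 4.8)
The plan is to reduce the claim to an elementary fact about surjective group homomorphisms, exploiting that the statement prescribes \emph{neither} group structure: we are free to equip $A$ and $B$ with group operations of our choosing. The key observation is that, in the single-source reading of Definition~\ref{def:linear}, a CWL function is just a group homomorphism, and a homomorphism onto $B$ has all of its fibers equal to cosets of its kernel, hence of common size $|A|/|B| = p/q$. So balancedness of $g$ is exactly the obstruction-free situation, and the whole proof amounts to installing group structures under which the (equal-sized) fibers of $g$ become the cosets of a subgroup of $A$.

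First I would fix the cyclic group $\mathbb{Z}_p$ together with its unique subgroup $N$ of order $p/q$, which exists precisely because $p/q$ divides $p$, so that $\mathbb{Z}_p/N \cong \mathbb{Z}_q$ has order $q$. The fibers $\{g^{-1}(b): b\in B\}$ partition $A$ into $q$ blocks of size $p/q$, and the cosets of $N$ partition $\mathbb{Z}_p$ into $q$ blocks of size $p/q$; matching blocks to cosets in any fashion and bijecting within each matched pair yields a bijection $\beta: A \to \mathbb{Z}_p$ that carries every fiber of $g$ onto a coset of $N$. Transporting the group operation of $\mathbb{Z}_p$ to $A$ through $\beta$ turns $A$ into a group in which $N_A := \beta^{-1}(N)$ is a subgroup of order $p/q$ whose cosets are exactly the fibers of $g$.

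Since the fibers of $g$ are now the cosets of $N_A$, the map $g$ factors as $g = \bar{g}\circ\pi$, where $\pi : A \to A/N_A$ is the canonical quotient homomorphism and $\bar{g} : A/N_A \to B$ is the well-defined bijection sending $aN_A$ to the common value of $g$ on that coset. Transporting the group structure of $A/N_A$ onto $B$ through $\bar{g}$ makes $B$ a group and $\bar{g}$ a group isomorphism, whence $f := \bar{g}\circ\pi = g$ is a homomorphism from the group $A$ onto the group $B$. Taking $G_1 = A$ and $G_e = B$ (which is the support of $f$, as $f$ is onto), this exhibits $f$ as a CWL function that equals $g$, as required.

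I do not anticipate a genuine obstacle here; the content is simply the remark that a balanced function becomes a homomorphism for a suitable choice of group structures, and the only care needed is the bookkeeping of installing the transported operations on $A$ (via $\beta$) and on $B$ (via $\bar{g}$) consistently so that $\pi$ and $\bar{g}$ are simultaneously structure-preserving. It is worth recording how each hypothesis enters: balancedness is necessary, since a homomorphism cannot have fibers of unequal size, and $q\mid p$ is exactly what guarantees a subgroup of $\mathbb{Z}_p$ of the required index. For the code-modification applications it matters that $f$ equals $g$ as a function, so substituting $f$ for $g$ on an edge alters no other message in the network and preserves $(0,\mathbf{R},n)$-feasibility while rendering the encoding on that edge CWL.
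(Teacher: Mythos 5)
Your proof is correct and takes essentially the same route as the paper's: both arguments relabel $A$ so that the equal-sized fibers of $g$ become the cosets of an index-$q$ subgroup of a standard group, whereupon $g$ itself is a surjective homomorphism and hence CWL. The only cosmetic difference is the choice of group structure—the paper relabels $A$ as $\mathbb{Z}_{p/q}\times\mathbb{Z}_q$ and realizes $f$ as the coordinate projection, while you transport $\mathbb{Z}_p$ and quotient by its unique subgroup of order $p/q$.
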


\begin{proof}
Since $q$ divides $p$, we have $p=k q$, here $k$ is a positive integer.
First, we relabel elements in $A$, $B$ as elements in $G_k\times G_q$, $G_q$, respectively. Here $G_k=\{0,1,\dots,k-1\}$ and $G_q=\{0,1,\dots,q-1\}$ are the additive groups with addition modulo $k$ and $q$ respectively.
The relabeling is done according to function $g$. Namely, for each $j \in B$, let $A_j$ be the pre-image of $j$ under $g$. By our assumptions, $A_j$ is of size $k$. We label element $i$ of $A_j$ by $(i,j) \in G_k\times G_q$.
Now, define $f:A\mapsto B$ by $f(i,j)=j$.
Note that $f$ is equal to $g$.
For any two elements $a=(i,j)$, $a'=(i',j')$ of $A$ we have $f(a+a')=j+j'=f(a)+f(a')$ which implies that $f$ is CWL.
\end{proof}


\emph{Subnetwork (a): The sources.}
We study the outgoing edge carrying message $V_i$ for each $i\in\mathcal{N}$. Using the definitions in \cite{chan2008dualities} we have $H(S_{[\mathcal{N}]})=h(\mathcal{N})$, $H(V_i)=h(i)$ and $S_{[\mathcal{N}]}$, $V_i$ are quasi-uniform. This implies that the supports of $\mathcal{S}_{[\mathcal{N}]}$ and $V_i$ are of sizes $13\times12$ and $13$ respectively. We relabel the supports $\mathcal{S}_{[\mathcal{N}]}$, $\mathcal{V}_i$ as groups of size $13\times12$, $13$, respectively. By Lemma~\ref{lemma:linear}, once we relabel the elements in $\mathcal{S}_{[\mathcal{N}]}$, $\mathcal{V}_i$ the global encoding function corresponding to message $V_i$ is CWL.

\emph{Subnetwork (b): Type 0 subnetworks.}
For the edge carrying message $W$, the encoding function is CWL since $W$ is uncoded and equals $S_{[\alpha]}$. 

\emph{Subnetwork (c): Type 1 subnetworks.}
The edge carrying $W'$ is of capacity $h(\alpha)$. As $2^{h(\alpha)}$ divides $2^{h(\mathcal{N})}$ for any $\alpha\subseteq \mathcal{N}$, by Lemma~\ref{lemma:linear} the quasi-uniform $V_\alpha$ can be represented as a CWL function of $\mathcal{S}_{[\mathcal{N}]}$. 

For the edge carrying $W$, we note that $2^{h(\mathcal{N})-h(\alpha)}$ divides $2^{h(\mathcal{N})}$ for any  $\alpha\subset \mathcal{N}$ and that the function corresponding to $W$ is balanced in the solution of \cite{chan2008dualities}. Thus, by Lemma~\ref{lemma:linear}, there exists a coding scheme such that $W$ is a CWL function of source message $\mathcal{S}_{[\mathcal{N}]}$.

\emph{Subnetwork (d): Type 2 subnetworks.}
Using the same argument as in Type 1 subnetworks, one can represent $W'$, $W^*$ and $W^{**}$ as CWL functions. For the edge carrying $W$, following the analysis on Type 1 networks, $V_\alpha$ can be represented as a CWL function of $\mathcal{S}_{[\mathcal{N}]}$. Let $G_\alpha$ be the group corresponding to $V_\alpha$. Since $H(\mathcal{S}_{[\alpha]})=H(V_\alpha)=h(\alpha)$, we may relabel the support of $\mathcal{S}_{[\alpha]}$ as the group $G_\alpha$ of size $2^{h(\alpha)}$. Then the function that takes the pair $\mathcal{S}_{[\mathcal{N}]}$ and $\mathcal{S}_{[\alpha]}$ and returns the corresponding combination $V_\alpha \circ \mathcal{S}_{[\alpha]}$, where $\circ$ is the group operation of $G_\alpha$, is CWL.

For $W''$, similar to the analysis for $W$ in Type 1 networks, we have that $W''$ of \cite{chan2008dualities} is balanced and thus can be represented as a CWL function.


\subsection{The network instance of \cite{connelly2017class}}

\begin{figure}
    \centering
    \includegraphics[draft=false,width=0.35\textwidth]{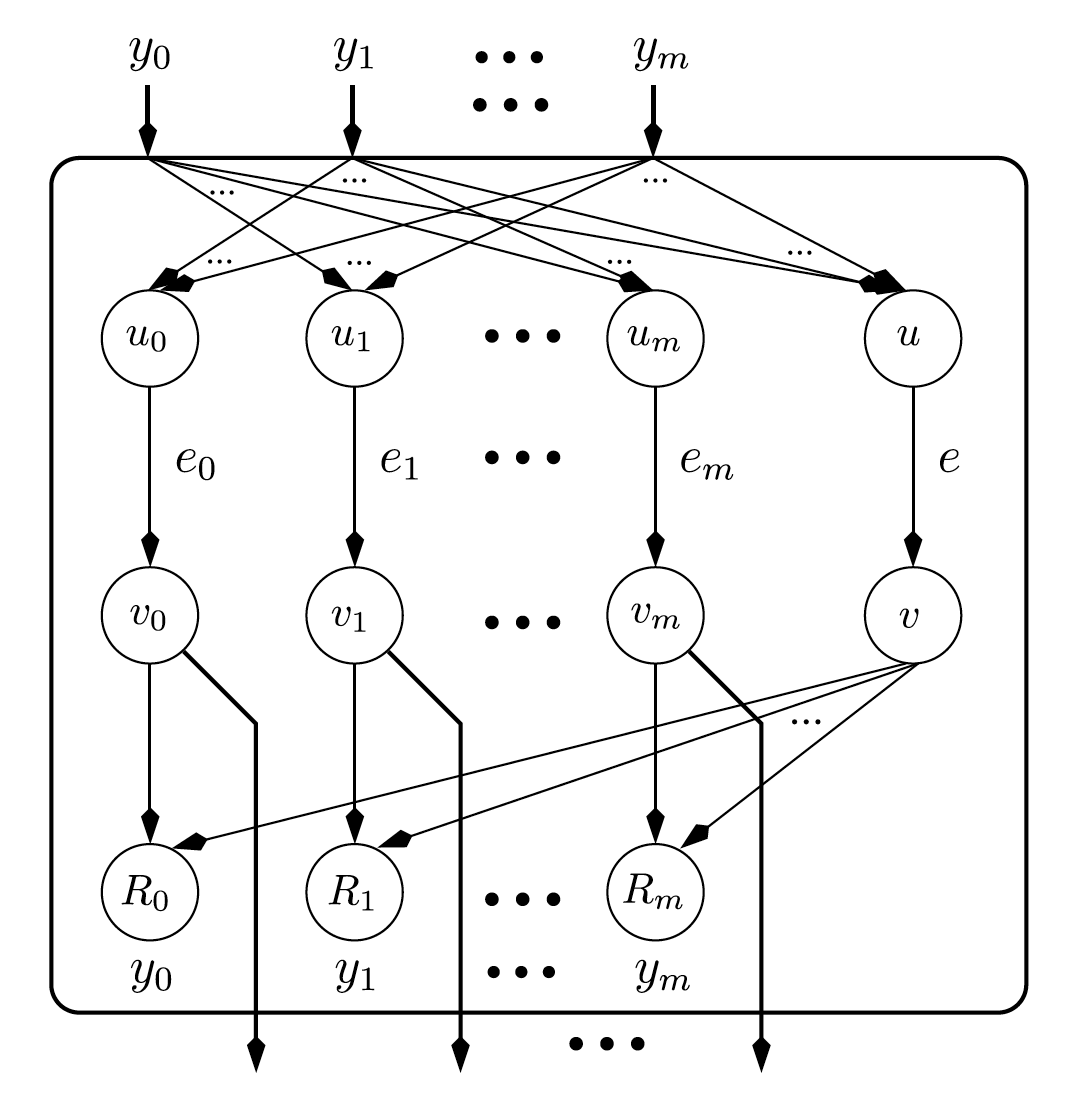}
    \caption{The building block $B(m)$ from \cite{connelly2017class} with the corresponding notation used in our analysis. The figure is taken from \cite{connelly2017class}.}
    \label{fig:B(m)}
\end{figure}

In \cite{connelly2017class}, a class of networks which is solvable but not linearly solvable is introduced. 
Using the notation of \cite{connelly2017class}, a non-linearly solvable network $N_4(m)$ is constructed by taking a disjoint union of component networks $N_1(m)$, $N_2(m,w)$, $N_3(m_1,m_2)$ with carefully chosen parameters. The later networks, in turn, are constructed from a certain building block $B(m)$, see Fig.~\ref{fig:B(m)}.

In $B(m)$ the only encoding edges are the edges carrying message $e_0,e_1,\dots,e_m, e$, all the other edges forward messages without encoding. Also, the edges carrying $e_0,e_1,\dots,e_m, e$ are directly connected to sources such that the local and global encoding function on these edges are identical (as the incoming edge messages are the source messages). Now we show, for any edge in the network $N_4$ defined in \cite{connelly2017class} that we can modify the code of \cite{connelly2017class} and make the global encoding function $\phi_{ge}$ CWL.

\subsubsection{Subnetwork: $N_1(m)$}
In the coding scheme of \cite{connelly2017class} all edges in $N_1(m)$ have a linear global encoding function.

\subsubsection{Subnetwork: $N_2(m,w)$}
By Lemma IV.4 of \cite{connelly2017class}, a solution for $N_2(m,w)$ is given as a single blocklength code over the ring $\mathbf{Z}_{mw}$ for each $l=1,2,\dots,w$ by 
\begin{equation*}
\begin{aligned}
   e_0^{(l)} &= \stackrel[j=1]{m+1}{\sum}x_j^{(l)}, \\e_i^{(l)} &= \pi_l(z)+\stackrel[j=1,j\neq i]{m+1}{\sum}x_j^{(l)}\ (i=1,2,\dots,m+1),\\ e^{(l)} &= \pi_l(z)+\stackrel[j=1]{m+1}{\sum}x_j^{(l)}.
\end{aligned}
\end{equation*}
From the expressions we notice that the edge message $e_0^{(l)}$ is a linear function of source messages.
By the proof of Lemma IV.2 of \cite{connelly2017class}, for each integer $y$, there exist unique integers $q_y$, $r_y$ such that $y=q_y m+r_y$ and $0\leq r_y< m$. 
For each $a=q_a m+r_a\in \mathbf{Z}_{mw}$ s.t. $0\leq r_a< m$, let $\hat{r}_a$ be the unique integer in $\{0,1,\dots,m-1\}$ such that $\hat{r}_a=r_a+1(\mod\ m)$ and for each $l=1,2,\dots, w-1$, define a permutation of $\mathbf{Z}_{mw}$ as follows:
\begin{equation*}
\begin{aligned}
   \pi_l(a)=&\begin{cases}
   q_a m+\hat{r}_a,& if\ q_a=l\\
   q_a m+r_a,& otherwise
   \end{cases}\\
   \pi_w(a)=& a =q_am+r_a.
\end{aligned}
\end{equation*}
Note that if $\pi_l(z)$ is the identity permutation, the corresponding edge messages $e_i^{(l)}$ and $e^{(l)}$ are linear functions of the source messages $((x_i)_{i\in [m+1]},z)$.
By the decoding scheme given in the proof of Lemma IV.2 of \cite{connelly2017class}, reassigning the permutation for each $l\in \{1,\dots,w\}$ from the set of permutations $\{\pi_l(a):l=1,\dots,w\}$ would not affect the validity of decoding.
Now, for edges carrying messages indexed by $l$, we can set $\pi_l$ to be the identity permutation (originally $\pi_w(a)$), such that the encoding function on the edges carrying $e_i^{(l)}$ and $e^{(l)}$ turn to be linear and in particular CWL. 

\subsubsection{Subnetwork: $N_3(m_1,m_2)$}
By Lemma V.4 of \cite{connelly2017class}, a solution for $N_3(m_1,m_2)$ ($m_1=m$ and $m_2=s m^{\alpha}$) is given as a single blocklength code over the ring $\mathbf{Z}_{m^{\alpha+1}}$ for each $l=1,2$ by 
\begin{equation*}
\begin{aligned}
   e_0^{(l)} &= \stackrel[j=1]{m_l}{\sum}x_j^{(l)}, \\e_i^{(l)} &= \pi_l(z)+\stackrel[j=1,j\neq i]{m_l}{\sum}x_j^{(l)}\ (i=1,2,\dots,m+1),\\ e^{(l)} &= \pi_l(z)+\stackrel[j=1]{m_l}{\sum}x_j^{(l)}.
\end{aligned}
\end{equation*}
By the encoding functions given above, we know for $l=1,2$ that $e_0^{(l)}$ is a linear function of source messages.
By the proof of Lemma V.2 of \cite{connelly2017class}, permutations $\pi_1$, $\pi_2$ of $\mathbf{Z}_{m^{\alpha+1}}$ are defined as follows. For each $\mathbf{Z}_{m^{\alpha+1}}$, let
\begin{equation*}
    a=\stackrel[{i=0}]{\alpha}{\sum}m^i a_i
\end{equation*}
denote the base $m$ representation of $a$. \cite{connelly2017class} define 

\begin{equation*}
\begin{aligned}
    \pi_1(a)=&m^\alpha a_0 +\stackrel[{i=1}]{\alpha}{\sum}m^{i-1} a_i\\ \pi_2(a)=& a= \stackrel[{i=0}]{\alpha}{\sum}m^i a_i.
\end{aligned}
\end{equation*}

As $\pi_2$ is the identity permutation, the encoding functions on edges carrying $\{e_0^{(2)},e_i^{(2)},e^{(2)}:i=1,2,\dots,m+1\}$ are linear.

Now we show an alternative solution that have linear encoding functions on $\{e_0^{(1)},e_i^{(1)},e^{(1)}:i=1,2,\dots,m+1\}$.
We define the permutations as follows
\begin{align}
     \pi_1(a)=& a= \stackrel[{i=0}]{\alpha}{\sum}m^i a_i\label{equa:pi_1new}\\ \pi_2(a)=& a_\alpha  +\stackrel[{i=0}]{\alpha-1}{\sum}m^{i+1} a_i\label{equa:pi_2new}
\end{align}

By the new definition, $\pi_1(a)$ is the identity permutation, and thus the encoding functions on edges carrying $\{e_0^{(1)},e_i^{(1)},e^{(1)}:i=1,2,\dots,m+1\}$ are linear.
Now we show the validity of the decoding scheme regarding to the new encoding functions. 
That is, by \cite{connelly2017class} for each $a\in\mathbf{Z}_{m^{\alpha+1}}$, it suffices to show that the mapping given by
\begin{equation*}
    a\mapsto (m\pi_1(a), s m^\alpha\pi_2(a))
\end{equation*}
is injective. For each $a,b\in \mathbf{Z}_{m^{\alpha+1}}$, suppose
\begin{align}
     m\pi_1(a)=&m\pi_1(b),\ (\mod\ m^{\alpha+1})\label{equa:pi_1m}\\
     sm^\alpha\pi_2(a)=&sm^\alpha\pi_2(b),\ (\mod\ m^{\alpha+1})\label{equa:pi_2sm}    
\end{align}
where $a=\stackrel[{i=0}]{\alpha}{\sum}m^i a_i$ and $b=\stackrel[{i=0}]{\alpha}{\sum}m^i b_i$. Then from (\ref{equa:pi_1new}),(\ref{equa:pi_1m}) we have
\begin{equation*}
    \stackrel[{i=0}]{\alpha-1}{\sum}m^{i+1} a_i=\stackrel[{i=0}]{\alpha-1}{\sum}m^{i+1} b_i \left(\mod\ m^{\alpha+1} \right)
\end{equation*}
Therefore as $0\leq a_i,b_i<m$
\begin{equation*}
    a_i=b_i\ (i=0,1,\dots,\alpha-1)
\end{equation*}
and from (\ref{equa:pi_2new}) and (\ref{equa:pi_2sm})
\begin{equation*}
    sm^{\alpha}a_\alpha = sm^{\alpha}b_\alpha\ (\mod\ m^{\alpha+1}).
\end{equation*}
Thus, from gcd($m,s$)=1
\begin{equation*}
    m^{\alpha}a_\alpha = m^{\alpha}b_\alpha\ (\mod\ m^{\alpha+1}).
\end{equation*}
We conclude as $0\leq a_\alpha,b_\alpha<m$ that
\begin{equation*}
    a_\alpha = b_\alpha.
\end{equation*}
Thus $a=b$ and we have a valid code with linear encoding functions on $\{e_0^{(1)},e_i^{(1)},e^{(1)}:i=1,2,\dots,m+1\}$.


\section*{VI. The $\epsilon$-error problem on Abelian group network codes}

In this section we follow the definitions and models in \cite{wei2017effect}.

\begin{theorem} \label{theorem:group}
Let $\mathcal{I}$ be a network coding instance.
Let $\epsilon < 1/2$.
If $\mathcal{I}$ is $(\epsilon,{\bf R},n)$-feasible by a group network code then $\mathcal{I}$ is $(0,{\bf R},n)$-feasible by a group network code. 
\end{theorem}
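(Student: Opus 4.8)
The plan is to exploit the rigid coset structure of group network codes to show that at \emph{every} terminal the decoding error probability is either exactly $0$ or at least $1/2$; the hypothesis $\epsilon<1/2$ then forces it to be $0$ everywhere, and keeping the same encoding functions while replacing each terminal's decoder by the (now exact) maximum-likelihood decoder yields a $(0,\mathbf{R},n)$ group network code with the same rates.

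First I would recall the group-characterizable description used in \cite{wei2017effect}: there is a master group $G$ with subgroups $\{G_f:f\in\mathcal{S}\cup\mathcal{E}\}$ such that each message $X_f$ is the left coset $gG_f$ of a uniformly random element $g\in G$, and $H(X_\alpha)=\log[G:\cap_{f\in\alpha}G_f]$ for every $\alpha$. Fix a terminal $t$ demanding source $i$ and set $G_{\In(t)}=\cap_{e\in\In(t)}G_e$. Conditioning on $X_{\In(t)}=g_0G_{\In(t)}$ means $g$ is uniform on $g_0G_{\In(t)}$; the map $g\mapsto gG_i$ restricted to $g_0G_{\In(t)}$ has all fibers equal to a translate of $G_{\In(t)}\cap G_i$ (each fiber is the nonempty intersection of a coset of $G_i$ with $g_0G_{\In(t)}$), so conditioned on any observed value of $X_{\In(t)}$ the variable $X_i$ is uniformly distributed over exactly $[G_{\In(t)}:G_{\In(t)}\cap G_i]$ cosets.

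Next I would draw the dichotomy: the optimal (maximum-likelihood) decoder at $t$ for source $i$ errs with probability $1-\bigl([G_{\In(t)}:G_{\In(t)}\cap G_i]\bigr)^{-1}$, independent of the observed value; by Lagrange's theorem this index is a positive integer, hence the error is $0$ when the index equals $1$ (equivalently $G_{\In(t)}\subseteq G_i$, so $X_i$ is a deterministic function of $X_{\In(t)}$) and at least $1/2$ otherwise. Since $\mathcal{I}$ is $(\epsilon,\mathbf{R},n)$-feasible with $\epsilon<1/2$, for each terminal $t$ and each demanded source $i$ we have $\Pr(\hat X_t\neq X_t)\ge\Pr(\hat X_t^{(i)}\neq X_i)\ge$ (optimal error at $t$ for $i$), so the optimal error is $<1/2$, hence $0$, hence $G_{\In(t)}\subseteq G_i$ for every demand. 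Retaining all encoding functions and replacing each terminal's decoder by the deterministic map supplied by the containments $G_{\In(t)}\subseteq G_i$ produces a zero-error code; the message supports, and therefore the rate vector $\mathbf{R}$, are unchanged, and the code remains a group network code, giving the claim.

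The main obstacle is making the conditional-distribution computation in the second paragraph fully rigorous inside the formalism of \cite{wei2017effect}: handling left versus right cosets consistently, verifying that $X_i$ conditioned on $X_{\In(t)}$ is genuinely \emph{uniform} over the claimed set of cosets rather than merely supported on it, and doing the bookkeeping when a terminal demands several sources simultaneously. Once the ``error is $0$ or at least $1/2$'' statement is in hand, the remaining steps are routine.
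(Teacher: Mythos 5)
Your argument is correct and follows essentially the same route as the paper: the paper likewise establishes the dichotomy (Claims~\ref{claim:Gpre1} and \ref{claim:Gpre2}) that conditioned on the incoming coset $g_tG_{\In(t)}$ the demanded source is uniform over $q=|G_{\In(t)}|/|G_{\In(t)}\cap G_i|$ cosets of $G_i$, so the error is at least $1/2$ unless $G_{\In(t)}\subseteq G_i$, in which case the coset-containment decoder is zero-error. The only cosmetic difference is that you phrase the conclusion via the maximum-likelihood decoder while the paper argues directly that $\epsilon<1/2$ rules out $G_{\In(t)}\not\subset G_i$ and then invokes the zero-error decoder of Claim~\ref{claim:Gpre2}.
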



Fix any $0< \epsilon <1/2$. We start with an $(\epsilon,{\bf R},n)$-feasible group network code for network instance $\mathcal{I}$. This group code can be characterized by a finite group $G$ and its subgroups.  We define $G_\mathcal{S}=\cap_{i\in\mathcal{S}} G_i$. Without loss of generality, we assume $|G_\mathcal{S}|=1$ ($G_\mathcal{S}=\{\mathbf{i}\}$) \cite{wei2017effect}, here we denote the identity element of $G$ as $\mathbf{i}$.

We define the binary operation ``$\cdot$'' on the group $G$. For simplicity of notations we neglect the symbol in the following sections. We may represent singletons $\{\mathbf{i}\}$ as $\mathbf{i}$.

Let $t$ be a terminal who demands $i' \in \mathcal{S}$, let $\phi_{In(t)}$ be the global encoding function that encodes the source message vector $X_{\mathcal{S}}=(X_i)_{i\in\mathcal{S}}$ to the incoming message $X_{In(t)}$. We denote the original decoding function as $\phi_t$. We assume the encoding functions are group characterizable and there is no restriction on the decoding functions.

By the definition of group characterization, the source message vector $X_{\mathcal{S}}$ and the incoming message (of terminal $t$) $X_{In(t)}$ can be characterized by subgroups $G_\mathcal{S}=\mathbf{i}$ and $G_{In(t)}$, respectively. Namely, the source message $x_{\mathcal{S}}$ and the incoming message $x_{In(t)}$ corresponds to the coset $g G_\mathcal{S}(=g)$ and $g G_{In(t)}$ respectively.

By our assumption, all sources random variables are uniform. Thus, for an element $g$ uniformly distributed over $G$, for each $i\in\mathcal{S}$, the source random variable is distributed uniformly over the coset of $G_i$ which contains $g$.

The encoding function takes $g G_{\mathcal{S}}=g$ as the input and outputs $g G_{In(t)} \supset g G_{\mathcal{S}}$.

All in all, the communication process to terminal $t$ can be described as follows:
\begin{equation*}
g \xrightarrow{\phi_{In(t)}} g G_{In(t)}\xrightarrow{\phi_t} \bar{g} G_i
\end{equation*}

\begin{claim} \label{claim:Gpre1}
In the context of group network codes, if $G_{In(t)}\not\subset G_i$, then for any decoding scheme the error probability is at least $1/2$.
\end{claim}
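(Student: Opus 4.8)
The plan is to show that the group structure forces the decoder's output to be ``spread out'' over a coset of $G_i$ that is strictly larger than a single source value, so that the decoder cannot pin down the demanded source $X_{i'}$ better than a random guess among at least two equally likely candidates. Recall the communication chain $g \xrightarrow{\phi_{In(t)}} gG_{In(t)} \xrightarrow{\phi_t} \hat g G_i$: the decoder sees only the coset $gG_{In(t)}$ and must output a guess for the coset $gG_{i'}$ (which is the source message $X_{i'}$, since $G_{\mathcal S}=\{\mathbf i\}$ means $x_{\mathcal S}=g$ determines everything and $x_{i'}$ corresponds to the coset $gG_{i'}$). The key observation is that $g$ is uniform on $G$, so conditioned on the observed coset $gG_{In(t)}$, the true source message $x_{i'}$ is uniform over the set of cosets of $G_{i'}$ that meet $gG_{In(t)}$.

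**First I would** make precise how many cosets of $G_{i'}$ intersect a given coset $gG_{In(t)}$. Since $G_{In(t)} \not\subset G_{i'}$, the subgroup $G_{In(t)} G_{i'}$ (or more carefully, the set of cosets of $G_{i'}$ that intersect $G_{In(t)}$, which is in bijection with $G_{In(t)}/(G_{In(t)}\cap G_{i'})$) has size strictly greater than $1$; in fact at least $2$. Translating by $g$, the coset $gG_{In(t)}$ meets at least two distinct cosets of $G_{i'}$, and by the uniformity of $g$ on $G$ each of these cosets is hit by exactly $|G_{In(t)}\cap G_{i'}|$ elements of $gG_{In(t)}$ — so conditioned on the decoder's observation, the true value $x_{i'}$ is uniformly distributed over a set of size $\ge 2$. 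Hence any (even randomized) decoding rule $\phi_t$ errs with probability at least $1 - 1/|G_{In(t)}/(G_{In(t)}\cap G_{i'})| \ge 1/2$.

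**The main obstacle** I expect is bookkeeping around the distinction between the terminal demanding $i'$ and the claim being stated for an abstract $i$: I would want to state and use the fact cleanly that $H(X_{i'} \mid X_{In(t)}) = \log |G_{In(t)}/(G_{In(t)}\cap G_{i'})| > 0$, which is exactly the group-characterizable entropy formula, and then invoke Fano's inequality (or a direct counting argument) to convert a positive conditional entropy / a uniform posterior of support $\ge 2$ into an error probability bound of $1/2$. The direct counting argument is cleaner here than Fano: because the posterior is exactly uniform on a set of size $m \ge 2$, the maximum-likelihood decoder achieves success probability exactly $1/m \le 1/2$, and no decoder does better, so the error probability is $\ge 1 - 1/m \ge 1/2$. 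I would also need to double check the edge case where $|G_{In(t)}\cap G_{i'}|$ could make the count subtle, but since $G_{In(t)}\not\subset G_{i'}$ is assumed, $m = [G_{In(t)} : G_{In(t)}\cap G_{i'}] \ge 2$ holds automatically. This establishes the claim; it is then used later (presumably) to argue that in a valid $(\epsilon,\mathbf R,n)$ code with $\epsilon < 1/2$ one must have $G_{In(t)} \subseteq G_{i'}$ for every terminal $t$ demanding $i'$, which gives zero-error decoding.
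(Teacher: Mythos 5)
Your proposal is correct and follows essentially the same route as the paper's proof: compute the index $q=|G_{In(t)}|/|G_{In(t)}\cap G_i|\geq 2$ via Lagrange's theorem, use the uniformity of $g$ on $G$ to show the posterior on the demanded source coset given the received coset $gG_{In(t)}$ is uniform over $q$ candidates, and conclude that no decoder succeeds with probability better than $1/q\leq 1/2$. The Fano alternative you mention is unnecessary, as you note; the direct counting argument is exactly what the paper does.
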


\begin{proof}


Let $q=|G_{In(t)}|/|G_{In(t)}\cap G_i|$. By Lagrange's theorem, $q$ must be an integer larger or equal than $2$. 

Thus, each coset of $G_{In(t)}$ contains $q$ different cosets of $G_{In(t)}\cap G_i$ which in turn implies that each coset of $G_{In(t)}$ intersects with $q$ different cosets of $G_i$. 

Now we show that conditioned on an incoming message to therminal $t$, the $q$ messages corresponding to $X_i$ are sent with equal probabilities.
Let $g$ be the message vector sent from the sources, $g_t G_{In(t)}$ be an incoming message received by the terminal $t$ and $g_i G_i$ be a message sent by source $i$.

If $(g_i G_i)\cap (g_t G_{In(t)})=\phi$, we have
\begin{equation*}
    \Pr(g_i G_i\ is\ sent | g_t G_{In(t)}\ is\ received)=0.
\end{equation*}

Since $g$ is uniform in $G$, we have
\begin{equation*}
\begin{aligned}
    &\Pr(g_i G_i\ is\ sent | g_t G_{In(t)}\ is\ received)\\
    =&\Pr (g\in g_i G_i|g\in g_t G_{In(t)})\\
    =&\frac{\Pr(g\in g_i G_i,g\in g_t G_{In(t)})}{\Pr(g\in g_t G_{In(t)})}\\ 
    =&\frac{\Pr(g\in g_i G_i\cap g_t G_{In(t)})}{\Pr(g\in g_t G_{In(t)})}\\ 
    =&\frac{|G_{In(t)}\cap G_i|/|G|}{|G_{In(t)}|/|G|}=1/q
\end{aligned}
\end{equation*}
At terminal $t$, the decoding function needs to determine the output among the $q$ different messages $X_i$, which are conditionally uniform given the incoming message. Thus the optimal correct decoding probability is $1/q$.
As $q\geq 2$, for any decoding scheme, the error probability is at least $1/2$.

\end{proof}

\begin{claim} \label{claim:Gpre2}
In the context of group network codes, if $G_{In(t)}\subseteq G_i$, then there exists a decoding scheme $\phi_t'$ with zero decoding error probability. 
\end{claim}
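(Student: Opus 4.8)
The plan is to write down the decoding function explicitly from the coset structure of the group code and then check that it never errs. Recall from the framework of \cite{wei2017effect} used in this section: since we have reduced to the case $G_\mathcal{S}=\{\mathbf{i}\}$, the transmitted source vector is identified with a single group element $g\in G$, the message of the demanded source $i$ is the coset $gG_i$, and the message received by $t$ is the coset $gG_{In(t)}$, where $\phi_{In(t)}$ is the map $g\mapsto gG_{In(t)}$. So a zero-error decoder for $t$ is precisely a rule that recovers $gG_i$ from $gG_{In(t)}$.

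First I would record the elementary group-theoretic fact underlying the whole claim: when $G_{In(t)}\subseteq G_i$, every coset of $G_{In(t)}$ is contained in a \emph{unique} coset of $G_i$, and the induced assignment $hG_{In(t)}\mapsto hG_i$ is well-defined (independent of the representative $h$). Indeed, for $h\in G$ and any two elements $h_1=hg_1$, $h_2=hg_2$ of $hG_{In(t)}$ we have $g_1,g_2\in G_{In(t)}\subseteq G_i$, hence $h_1G_i=hG_i=h_2G_i$; so all of $hG_{In(t)}$ lies in the single coset $hG_i$. I would then \emph{define} $\phi_t'$ to be exactly this assignment: on input $x_{In(t)}$, which corresponds to some coset $hG_{In(t)}$ of $G_{In(t)}$, output the source reconstruction corresponding to $hG_i$.

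Finally I would verify correctness: if the transmitted vector is $g$, the true value of $X_i$ is the coset $gG_i$, while the received message is $gG_{In(t)}\subseteq gG_i$; hence $\phi_t'$ returns $gG_i=X_i$. Since this holds for every $g\in G$, the decoding error probability of $\phi_t'$ is $0$. I do not expect any genuine obstacle here — the entire content is the containment-of-cosets observation (the counterpart of the $q\ge 2$ obstruction in Claim~\ref{claim:Gpre1}) — but the one point that warrants a line of care is confirming that $\phi_t'$ is well-defined on the support of $X_{In(t)}$, i.e.\ that it does not depend on the chosen coset representative, which is exactly what the hypothesis $G_{In(t)}\subseteq G_i$ supplies. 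If a terminal demands several sources simultaneously, the same construction is applied coordinate-wise, using the inclusion $G_{In(t)}\subseteq G_i$ for each demanded source $i$.
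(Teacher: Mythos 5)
Your proposal is correct and follows essentially the same route as the paper: the paper's proof also defines the decoder by mapping the received coset of $G_{In(t)}$ to the unique coset of $G_i$ containing it, and notes this yields zero error. Your added verification of well-definedness via the containment $G_{In(t)}\subseteq G_i$ is a harmless elaboration of the same idea.
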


\begin{proof}
Since $G_{In(t)}\subseteq G_i$, we can design a decoding scheme that takes the coset of $G_{In(t)}$ as input and outputs the coset of $G_i$ which includes $G_{In(t)}$. For such scheme, the decoding error probability is zero.

\end{proof}

With the results above, we prove Theorem~\ref{theorem:group}.

\begin{proof}
Consider a group network code for $\mathcal{I}$ that is $(\epsilon,{\bf R},n)$-feasible for which $\epsilon \in (0,1/2)$. We discuss two scenarios for the given network code.

We first show that it cannot be the case that there exists a terminal decoding $X_i$ for which $G_{In(t)}\not\subset G_i$.
As in that case, by Claim~\ref{claim:Gpre1}, for any decoder, we know that an error occurs with decoding error probability at least $1/2$ which contradicts our assumption that $\epsilon<\frac{1}{2}$.

Otherwise, for all terminals by Claim~\ref{claim:Gpre2} there exists a decoding scheme with zero error probability.

\end{proof}


\end{document}